\theoremstyle{plain}
\newtheorem{theorem}{Theorem}
\newtheorem{corollary}{Corollary}
\newtheorem{lemma}{Lemma}
\newtheorem{proposition}{Proposition}
\theoremstyle{definition}
\newtheorem{definition}{Definition}
\begin{document}

\def\nonext{\hbox{*}}
\def\set1n{\{1,\ldots,n\}}
\def\Naturals{{\bf N}}
\def\Integers{{\bf Z}}
\def\starN{\nonext\Naturals}
\def\Rationals{{\bf Q}}
\def\Reals{{\bf R}}
\def\starR{\nonext\Reals}
\def\starRplus{\nonext \Reals_+}
\def\Rplus{\Reals_+}
\def\Rplusplus{{\Reals_{++}}}
\def\Rk{\Reals^k}
\def\Rdplus{\Reals^d_+}
\def\Rd{\Reals^d}
\def\Rm{\Reals^m}
\def\RJ{\Reals^J}
\def\RJplus{\Reals^J_+}
\def\Rmplus{\Reals^m_+}
\def\RK{\Reals^K}
\def\Rkplus{{\Reals^k_+}}
\def\Rkplusplus{{\Reals^k_{++}}}
\def\Rtwoplus{{\Reals^2_+}}
\def\Rtwoplusplus{{\Reals^2_{++}}}
\def\Deltaplusplus{{\Delta_{++}}}
\def\Deltaplus{{\Delta_{+}}}
\def\starpref{{\nonext \hspace{-.05in} \succ}}
\def\notstarpref{{\nonext \hspace{-.05in} \not\succ}}
\def\stpref{{\st \hspace{-.05in} \succ}}
\def\st{{{}^\circ}}
\def\stin{{\rm st}^{-1}}
\def\Var{{\rm Var}}
\def\ns{{\rm ns}}

\setlength{\parskip}{.15in}

\title{Uncertainty and Robustness of Surplus Extraction\thanks{We thank the Co-Editor Marciano Siniscalchi and referees for helpful comments. Financial support from the NSF under grants SES 0721145 and SES 1227707 is gratefully acknowledged. Some of the results in this paper originally appeared in the working paper ``Uncertainty in Mechanism Design,'' Lopomo, Rigotti, and Shannon (2009). Contact: giuseppe.lopomo@duke.edu,  luca@pitt.edu, cshannon@econ.berkeley.edu }}
\date{May 2020}
\author{Giuseppe Lopomo \\ Duke University \and Luca Rigotti\\ University of Pittsburgh \and Chris Shannon \\ UC Berkeley}
\maketitle
\begin{abstract}
This paper studies a robust version of the classic surplus extraction problem, in which the designer knows only that the beliefs of each type belong to some set, and designs mechanisms that are suitable for all possible beliefs in that set. We derive necessary and sufficient conditions for full extraction in this setting, and show that these are natural set-valued analogues of the classic convex independence condition identified by Cr{\' e}mer and McLean (1985, 1988). We show that full extraction is neither generically possible nor generically impossible, in contrast to the standard setting in which full extraction is generic. When full extraction fails, we show that natural additional conditions can restrict both the nature of the contracts a designer can offer and the surplus the designer can obtain. 
\end{abstract}

\setlength{\parskip}{.15in}

\section{Introduction}

This paper studies a robust version of the classic surplus extraction problem. In the standard setting, when agents' private information is correlated with their beliefs, a designer can typically extract all or virtually all information rents in a broad range of environments, as shown in the pioneering work of Cr{\' e}mer and McLean (1985, 1988) and McAfee and Reny (1992). We consider a robust version of this problem, in which the designer knows only that the beliefs of each type belong to some set, and designs mechanisms that are suitable for all possible beliefs in that set. We derive necessary and sufficient conditions for full extraction in this setting, and show that these are natural set-valued analogues of the classic convex independence condition identified by Cr{\' e}mer and McLean (1985, 1988). We show that full extraction is neither generically possible nor generically impossible, in contrast to the standard setting in which full extraction is generic. When full extraction fails, we show that natural additional conditions can restrict both the nature of the contracts a designer can offer and the surplus the designer can obtain.

The classic full extraction results of Cr{\' e}mer and McLean (1985, 1988) and McAfee and Reny (1992) create a puzzle for mechanism design because they suggest that
private information often has no value to agents. In addition, these results suggest that complicated stochastic mechanisms are typically optimal for the designer. Both predictions seem unrealistic. In the literature that followed, a number of papers showed that full extraction breaks down under natural modifications to the setting. In particular, risk aversion, limited
liability, budget constraints, collusion, and competition are among the reasons that
full extraction may fail (see Robert (1991), Laffont and Martimort (2000), Peters (2001), or Che and Kim (2006), for example). On the other hand, relaxing the standard setting to include these features often eliminates other desirable properties, while also typically predicting complex mechanisms that bear little resemblance to those frequently used in practice. In this paper, we show not only that our robustness requirements can limit the designer's ability to fully extract rents, but also that when this happens the designer can be limited to simpler and more realistic mechanisms.

Under the notion of robustness in this model, whether or not full extraction is possible depends in part on the amount of uncertainty about agents' beliefs. If agents' beliefs are given precisely, the model reduces to the standard problem, and full extraction is possible generically. In particular, in this case full extraction holds if agents' beliefs satisfy the convex independence condition of Cr{\' e}mer and McLean (1988) (which is satisfied on a generic set). Starting from such beliefs, we show that full extraction continues to hold with uncertainty about these beliefs, provided uncertainty is sufficiently small. Full extraction fails when uncertainty is sufficiently large, however. We show  that full extraction holds in general if beliefs satisfy a natural set-valued version of convex independence. When this condition is satisfied, the mechanisms that achieve full extraction are variants of the full extraction mechanisms constructed by Cr{\' e}mer and McLean (1988). When full extraction fails, however, we show that the designer can be restricted to simple mechanisms: natural conditions limit the designer to offering a single contract, and can make a deterministic contract optimal for the designer. 

We follow McAfee and Reny (1992) in considering a reduced form description of the surplus extraction problem. In a prior, unmodeled stage, agents play a game that leaves them with some information rents as a function of their private information. This game could be an auction, a bargaining game, or a purchase from a seller, for example. Private information is summarized by the agent's type.  The current stage also has an exogenous source of uncertainty, summarized by a set of states, on which contract payments can depend. For some applications, it is natural to take the state space to be the set of types, although we follow McAfee and Reny (1992) in allowing the state space to be arbitrary. We assume that both the set of types and the state space are finite, as in the original environments considered by Cr{\' e}mer and McLean (1985, 1988).\footnote{McAfee and Reny (1992) consider an infinite set of types, and allow for an infinite state space; their results characterize virtual extraction as a consequence.} Each type determines both the information rent and a set of beliefs over the state space. 

Our main results define and characterize full extraction in this model. We start with notions of full extraction motivated by robustness to uncertainty about agents' beliefs; these notions require incentive compatibility and individual rationality conditions to hold for all possible beliefs of agents. We then connect these notions to choice behavior under Knightian uncertainty, as in the foundational model of Bewley (1986) (see also Bewley (2002)). We show that these notions of extraction are closely related, and that the same set-valued analogues of convex independence provide necessary and sufficient conditions for full extraction in each case. Our results can thus be interpreted as incorporating robustness to the designer's misspecification of agents' beliefs, or robustness to agents' perceptions of uncertainty. We show that full extraction is neither generically possible nor generically impossible in this setting, and that this result holds under both topological and measure-theoretic notions of genericity. This is in contrast with the standard environment in which each type is associated with a single belief, either because the designer is not concerned about misspecification or because agents do not perceive uncertainty, and convex independence is satisfied for a generic subset of type-dependent beliefs whenever there are at least as many states as types. 

We also explore limits on the complexity of contracts the designer can offer. When convex independence is satisfied and full extraction holds, the designer typically offers a menu of contracts with as many different contracts as types. When convex independence fails, however, the designer can be restricted both in the complexity of contracts offered and in the surplus that can be extracted. In the spirit of our earlier work (Lopomo et al. (2009)), we show that this is particularly salient when the beliefs associated with different types intersect. We adapt the notion of fully overlapping beliefs introduced in Lopomo et al. (2009), which gives a notion of richness of the beliefs common to several types, to this setting with finitely many types. When beliefs are fully overlapping, we show that the designer is limited to offering a single contract. Under additional conditions on the designer's beliefs and objective, we show that a single deterministic contract can be optimal for the designer.      

Our paper is related to several strands of literature on robustness and ambiguity in mechanism design problems more generally, and on surplus extraction more specifically. We adapt the basic framework of our earlier work in Lopomo et al. (2009) to the general setting of McAfee and Reny (1992), and restrict attention to problems with a finite set of states and types. Our earlier work instead considers more general mechanism design problems in settings with an infinite set of types, and gives conditions under which incentive compatible mechanisms must be simple, in particular ex-post incentive compatible. The key condition we introduce in Lopomo et al. (2009) is a version of fully overlapping beliefs. We illustrate using the leading example of epsilon-contamination, which we adapt here in section 3. As in the current paper, we argue that the results of Lopomo et al. (2009) can be interpreted either as robustness or ambiguity. Jehiel et al.  (2012) adopt the model of robustness introduced in Lopomo et al. (2009), and show that ex-post implementation is generically impossible in the epsilon-contamination example. Chiesa et al. (2015) also adapt the model of Lopomo, Rigotti, and Shannon (2009) to a setting with a finite set of types, and focus on the performance of Vickrey mechanisms in either dominant or undominated strategies. Fu et al. (2017) also  consider the problem of surplus extraction with finitely many types when the designer does not have full information about the distribution of buyers' valuations. They consider auctions in which the seller can observe samples from a finite set of possible distributions of buyers' valuations, and can condition the mechanism on these samples. They give tight bounds on the number of samples needed for full extraction using dominant strategy incentive compatible mechanisms. Albert et al. (2019) study related sampling and estimation problems for a monopolistic seller facing a single buyer, and consider mechanisms that relax incentive and participation constraints by only requiring that these hold with high probability. Their focus is  primarily on computational aspects of finding optimal mechanisms in this class. They give conditions under which the optimal such mechanism can be computed in polynomial time, and under which it can be learned using a polynomial number of samples from the buyer's distribution.  Instead, we consider the more general setting of McAfee and Reny (1992), and assume that the designer does not have access to samples from the distribution of buyers' rents, while knowing the set of possible such distributions. We then give conditions under which full surplus extraction is possible regardless of buyers' beliefs.   

Our paper also connects with a growing body of work on mechanism design under ambiguity. This includes Bodoh-Creed (2012), Bose et al. (2006), and Wolitzky (2016), which extend standard mechanism design problems to maxmin expected utility agents, and Bose and Renou (2014) and De Tillio et al. (2016), which allow for maxmin expected utility agents facing ambiguity about aspects of the mechanism. In particular, Wolitzky (2016) derives a necessary condition for social choice rules to be implementable with maxmin expected utility agents, and provides a characterization of when efficient trade is implementable in bilateral trade, extending the classic Myerson-Satterthwaite theorem (Myerson and Satterthwaite (1983)). Most closely related is Bose and Daripa (2009), who study auction design and surplus extraction in a model with ambiguity. They consider an independent private values setting in which bidders are assumed to have maxmin expected utility with beliefs modeled using epsilon-contamination. They show that the seller can extract almost all of the surplus by using a dynamic mechanism which is a modified Dutch auction. This is in contrast to results for optimal auctions without ambiguity, or to optimal static auctions with ambiguity, which leave rents to all but the lowest types (see Bose, Ozdenoren and Pape (2006)).  We focus instead on an analogue of the correlated beliefs setting of Cr{\' e}mer and McLean (1988), for which full extraction holds generically under unique priors using static mechanisms. Our results show that full extraction can still hold with ambiguity, using variants of the static mechanisms of Cr{\' e}mer and McLean (1988), but that full extraction is no longer a generic feature of the model with ambiguity.   

Our paper is also related to a substantial literature exploring the limits of the classic full extraction results. As discussed above, risk aversion, limited liability, budget constraints, collusion, and competition have all been shown to be reasons full extraction might fail. Our results can be understood as showing that robustness to sufficiently imprecise information about beliefs or sufficient ambiguity might be other reasons for the failure of full extraction. Another important recent strand of work questions whether the conclusion that full extraction holds generically is robust to alternative models of agents' beliefs and higher order beliefs, to the relationship between payoffs and beliefs, or to the notion of genericity used. Neeman (2004) focuses on the relationship between payoff and beliefs, and notes that full extraction fails when different rents can be associated with the same beliefs. Heifetz and Neeman (2006) show that the type spaces in which this is ruled out, and thus ``beliefs determine preferences'' are required and full extraction is possible, are not generic in a measure-theoretic sense within the universal type space. Barelli (2009) and Chen and Xiong (2011) argue that whether such type spaces are generic or not depends on the notion of genericity used, and show that such type spaces are instead topologically generic. Chen and Xiong (2013) show  that full extraction also  holds generically, again in a topological sense. Chen and Xiong (2013) establish their generic result by showing that full extraction mechanisms are robust to sufficiently small changes in priors, in that for each $\varepsilon >0$, if a mechanism from a particular class extracts all but at most $\varepsilon$ surplus for a given prior, then the same mechanism also extracts all but $\varepsilon$ surplus for all priors in a sufficiently small weak-$^*$ neighborhood of the original prior. Our paper complements these results by showing that full extraction can hold robustly in a related sense. In our setting, full extraction is neither generically possible nor generically impossible, regardless of whether topological or measure-theoretic notions of genericity are invoked.\footnote{Ahn (2007) establishes the existence of an analogue of the Mertens-Zamir universal type space for
hierarchies of ambiguous sets of beliefs. As he shows, ambiguity at any level of the belief hierarchy corresponds to a set of beliefs over types.} 

The paper proceeds as follows. In section 2, we set up the basic model and definitions, and discuss the model of incomplete preferences. In section 3, we give the leading example to illustrate the main ideas and results of the paper. In section 4 we develop the general model, and give our main results characterizing sufficient and necessary conditions for full extraction. We also connect these results to choice behavior under Knightian uncertainty, and provide alternative results for notions of full extraction motivated by such behavior. We also consider limits to the designer's ability to extract rents when these results do not apply. In section 5 we consider the genericity of full extraction. Finally, in section 6 we consider several extensions of our main results. We explore the performance of given extraction mechanisms when beliefs are perturbed, and show that several of our main results carry over to other standard models of ambiguity, including versions of maxmin and alpha-maxmin expected utility.

\section{Set-up and Preliminaries}

In this section, we give the  set up of the model and questions we address. We next recall some preliminary definitions and define some basic notation that we will use throughout. Finally, we discuss the  model of incomplete preferences, motivated by Bewley (1986).

\subsection{Set-up and Extraction Notions}

We first lay out the basic set-up, definitions, and notation used throughout the paper, and give the definitions of surplus extraction underlying the main results.

We follow McAfee and Reny (1992) in giving a reduced form description of the surplus extraction problem. In a prior, unmodeled stage, agents play a game that leaves them with some rents as a function of their private information. The game could be an auction, a bargaining game, or a purchase from a seller, for example. The designer (a seller, a mediator, etc.) can charge the agent for participating in the game, while the agent can choose whether or not to participate. If the agent does not participate, her payoff is zero. 

Private information is summarized by the type $t\in T$, where $T$ denotes the set of possible types.  The current stage also has an exogenous source of uncertainty, summarized by a set of states $S$, on which contract payments can depend. For some applications, it is natural to take $S=T$, although we follow McAfee and Reny (1992) in allowing $S$ to be arbitrary. We assume throughout that $S$ and $T$ are finite, and use $S$ and $T$ to denote both the sets and their cardinalities. We use the standard notation $\Delta(A)$ to denote the set of probabilities on a finite set $A$; in particular, $\Delta(S) = \{ \pi\in{\bf R}^S_+: \sum_{s\in S} \pi_s = 1\}$.

Each type $t\in T$ is then associated with a value $v(t)\in {\bf R}$, representing the rents from the prior stage, and a set of beliefs $\Pi(t) \subseteq \Delta (S)$, which we assume is closed and convex.  The information rent can also be allowed to depend on the public information $S$, with additional notation and steps in several places, but no change in any of the results. 

The designer then offers a menu of stochastic contracts $\{ c(t): t\in T\} \subseteq {\bf R}^S$, from which agents select a contract. After the state is realized, agents pay the designer the amount specified by the contract in that state.

In the classic setting of Cr{\' e}mer and McLean (1985, 1988) and McAfee and Reny (1992), each type $t\in T$ is associated with a unique belief $\pi(t)\in \Delta(S)$, and typically {\it full extraction} holds, that is, given any $v:T\to {\bf R}$, there exists a menu of contracts $\{c(t) \in {\bf R}^S : t\in T\}$ such that for each $t\in T$:
\[ 
v(t) - \pi(t)\cdot c(t) = 0
\]
and
\[
v(t) -\pi(t)\cdot c(s) \leq 0 \ \ \ \forall s\not= t
\]

We consider two natural extensions of this notion to account for uncertainty in agents' beliefs. Both reflect the idea that the designer offers agents a menu of stochastic contracts from which they choose, based on minimizing their expected costs. The first requires contracts to be uniformly ranked by all agents for all beliefs, and thus strengthens standard incentive constraints in response to uncertainty. We take this to be the robust version of full extraction in this setting. The second is weaker, instead requiring that contracts are ranked only for some beliefs. This notion weakens incentive constraints in response to uncertainty.

\begin{definition}
\emph{Full extraction} holds if, given $v:T\to {\bf R}$, there exists a menu of contracts $\{ c(t) \in {\bf R}^S: t\in T\}$ such that for each $t\in T$, 
\[
v(t) -\pi(t) \cdot c(t) \geq 0 \ \ \forall \pi\in \Pi(t), \mbox{ with } v(t) = \pi(t) \cdot c(t) \ \mbox{ for some } \pi\in \Pi(t)
\]
and
\[
v(t) - \pi(t) \cdot c(s) \leq 0 \ \ \forall \pi\in \Pi(t), \ \ \forall s\not= t
\]
\end{definition}

\bigskip

\begin{definition}
\emph{Weak full extraction} holds if, given $v:T\to {\bf R}$, there exists a menu of contracts $\{ c(t) \in {\bf R}^S: t\in T\}$ such that for each $t\in T$,
\[
v(t) -\pi(t) \cdot c(t) = 0 
\]
and
\[
v(t) - \pi(t) \cdot c(s) \leq 0 \ \ \forall s\not= t
\]
for some $\pi(t) \in \Pi(t)$. 
\end{definition}

Note that in either case extraction requires that for each type $t\in T$, the expected value of the contract $c(t)$ is equal to his value $v(t)$ for (at least) one belief $\pi(t) \in \Pi(t)$. While it might be natural to define extraction using the stronger requirement that $v(t) = \pi\cdot c(t)$ for all $\pi\in \Pi(t)$, this will rule out stochastic contracts for sufficiently rich beliefs, and hence extraction will typically not be possible under this definition. This point is related to other results on the limits to full extraction that we develop at the end of section 4.\footnote{In particular, if $\Pi(t)$ has full dimension (defined at the end of this section), then $\pi\cdot c(t) = v(t)$ for all $\pi\in \Pi(t)$ if and only if $c(t) = (v(t), v(t),\ldots ,v(t))$, that is, if and only if $c(t)$ is the constant contract with value $v(t)$ in each state. Full extraction is then impossible for types $s\not= t$ with $v(s) > v(t)$.}

For now we take these as primitive definitions. We show below in section 4 that these notions can be naturally related to agents' choice behavior when  the model is interpreted using ambiguity, and also discuss connections to several other natural notions. 

Following Cr{\' e}mer and McLean (1985, 1988) and McAfee and Reny (1992), our main results give conditions on beliefs under which full extraction or weak full extraction holds. We then consider the genericity of these conditions. We also consider natural restrictions under which the designer is limited to a single contract, and under which the designer typically will be limited in ability to extract information rents.

We close this section by recalling some standard definitions and notation for sets in ${\bf R}^n$. For a set $A\subseteq {\bf R}^n$, $\overline{\mbox{co}}(A)$ denotes the (closed) convex hull of $A$. For $A, B\subseteq {\bf R}^n$ with $A\subseteq B$, $\mbox{rint} A$ denotes the relative interior of $A$, relative to a superset $B$. If $B$ is a finite set with $k$ elements and $A \subseteq \Delta(B)$, $\mbox{rint} A$ denotes the relative interior of $A$, relative to $\{ x\in {\bf R}^k: \sum_i x_i = 1\}$; thus $\mbox{rint} A = A \cap {\bf R}^k_{++}$. 

We say a set $A\subseteq {\bf R}^n$ has \emph{full dimension} if 
\[
a\cdot x = 0 \ \ \forall a\in A \Rightarrow x=0
\]
Note that $A \subseteq {\bf R}^n$ has full dimension if $A$ contains a set $\{ a_1, \ldots ,a_n\}$ of $n$ linearly independent elements. 
In particular, if $A\subseteq \Delta(S)$ is convex, then $A$ has full dimension if and only if $\mbox{rint} A \not= \emptyset$. 

Finally, as a notational shorthand, throughout the paper we use a constant $r\in {\bf R}$ interchangeably with the corresponding  constant vector $(r, r, \ldots , r) \in {\bf R}^n$. 

\subsection{Knightian Uncertainty and Incomplete Preferences}

In this section, we briefly discuss the model of decision making in the presence of ambiguity or
Knightian uncertainty that gives a behavioral foundation to our notion of robustness. The connection between robustness and ambiguity comes from Knightian decision theory, developed in Bewley (1986), in which ambiguity is modeled by
incomplete preferences.\footnote{See also Aumann (1962, 1964) and, more recently, Dubra et al. (2004), Ok (2002), Shapley and Baucells (2008), Ghirardato et al. (2003), Gilboa et al. (2010), and Girotto and Holzer (2005).} We begin by briefly describing the framework and results.

Bewley (1986) axiomatizes incomplete preference relations that can be represented by a family of
subjective expected utility functions. The main result in Bewley (1986) shows that a
strict preference relation that is not necessarily complete, but satisfies other standard axioms of subjective expected utility, can be represented using a von Neumann-Morgenstern utility index and a family of probability distributions, with strict preference corresponding
to unanimous ranking according to this family. Here we adopt a slightly modified version of Bewley preferences that
uses weak preference as a primitive.

To illustrate in more detail, let $x, y\in {\bf R}^S_+$ denote state-contingent consumption bundles and let $\succsim$ denote a weak preference relation on ${\bf R}^S_+$. Bewley (1986) axiomatizes preference relations that can be represented by a closed, convex set $\Pi\subseteq \Delta(S)$ and a continuous, concave function $u:{\bf R}_+ \to {\bf R}$, unique up
to positive affine transformations, as follows:
\[
x\succsim y \mbox{ if and only if } \sum_{s\in S} \pi_s u(x_s) \geq \sum_{s\in S} \pi_s u(y_s) \ \ \forall \pi\in \Pi
\]
If $\succsim$ is complete, the set $\Pi$ reduces to a singleton and the standard subjective expected
utility representation obtains. Incompleteness is thus characterized by
multiplicity of beliefs. If $\succsim$ is not complete, comparisons between alternatives
are carried out one probability distribution at a time: one bundle is strictly preferred to
another if and only if its expected utility is larger under every probability distribution in
the set $\Pi$.

This representation captures the Knightian distinction between risk and ambiguity,
where an event is risky if its probability is known, and ambiguous otherwise. The decision
maker perceives only risk when $\Pi$ is a singleton, and perceives ambiguity otherwise. Thus incompleteness and ambiguity are two sides of the same phenomenon in this model: the amount of ambiguity the decision maker perceives and the degree of incompleteness of her
preference relation $\succsim$ are both measured by the size of the set $\Pi$.\footnote{For precise results along these lines, see Ghirardato et al. (2004) or Rigotti and Shannon (2005).}

An agent might choose an alternative because no other feasible option is preferred to it, or because it is
preferred to all other feasible options. For a general incomplete preference relation, these  notions need not coincide. We recall the standard definitions of maximal and optimal choices, reflecting this potential distinction, next. 

\begin{definition}
Given a preference relation $\succsim$ on ${\bf R}^S$ and subset $X\subseteq {\bf R}^S$, an element $x\in X$ is \emph{maximal} for $\succsim$ in $X$ if there exists no $y\in X$ such that $y\succ x$. An element $x\in X$ is \emph{optimal} for $\succsim$ in $X$ if $x\succsim y$ for all $y\in X$. 
\end{definition}

For Bewley preferences, these notions have particularly simple characterizations using the family of expected utility functions that represent the preference relation. In this case, an element $x\in X\subseteq {\bf R}^S$ is maximal for $\succsim$ in $X$ if and only if 
\[
\forall y\in X \ \exists \pi\in \Pi \mbox{ such that } \sum_{s\in S} \pi_s u(x_s) \geq \sum_{s\in S} \pi_s u(y_s) 
\]
An element $x\in X$ is optimal for $\succsim$ in $X$ if instead 
\[
\sum_{s\in S} \pi_s u(x_s) \geq \sum_{s\in S} \pi_s u(y_s) \ \ \forall y\in X, \ \ \forall \pi\in \Pi
\] 

In our problem, we take the preference relation $\succsim$ to have a particularly simple representation. For type $t\in T$, we assume that for any pair of stochastic contracts $x, y\in {\bf R}^S$, 
\[
x\succsim_t y \mbox{ if and only if } v(t) - \pi\cdot x \geq v(t) -\pi\cdot y \ \ \forall \pi\in \Pi(t)
\]
or equivalently, 
\[
x\succsim_t y \mbox{ if and only if }  \pi\cdot x \leq \pi\cdot y \ \ \forall \pi\in \Pi(t)
\]
For agents choosing from a menu of stochastic contracts $C  \subseteq {\bf R}^S$, a contract $c^m\in C$ is maximal for type $t\in T$ in $C$ if there is no contract $c\in C$ such that   
\[
\pi\cdot c < \pi\cdot c^m \ \ \forall \pi\in \Pi(t)
\]
In this case, it is possible that there are other feasible contracts in $C$ that have lower expected cost than $c^m$ for some beliefs and higher expected cost than $c^m$ for other beliefs. A contract $c^o\in C$ is optimal for type $t\in T$ in $C$ if instead 
\[
\pi\cdot c^o \leq \pi\cdot c \ \ \forall \pi\in \Pi(t)
\]
In this case, no other contract in $C$ has lower expected cost than $c^o$ for any belief in $\Pi(t)$. 

In any arbitrary finite or compact menu $C\subseteq {\bf R}^S$, every type $t\in T$ will have maximal choices. Even in a finite menu $C$, however, types need not have optimal choices. Offering a menu in which types have optimal choices thus can be a constraint on the designer. We show in section 4 that this constraint does not preclude full extraction under some natural conditions, but that this becomes a significant constraint under other natural conditions, possibly even limiting the designer to offering a single contract. Both results are illustrated in the following section, by means of a simple example.

\section{Leading Example}

We start with an example to illustrate the main ideas and results. 

For the example, we start with the standard problem of Cr{\' e}mer-McLean and McAfee-Reny in which the set $\Pi(t)$ is a singleton for each $t$. Let $\pi(t) \in \Delta(S)$ denote this single element for each $t\in T$, and suppose agents' beliefs $\{ \pi(t) : t\in T\}$ satisfy the standard  {\it convex independence} condition of Cr{\' e}mer-McLean (1988): for any $ t\in T$,
\[
\pi( t) = \sum_{s\in T} \mu_s \pi(s) \mbox{ for some } \mu\in \Delta(T) \Rightarrow \mu_t = 1
\]
Equivalently, convex independence requires that for each $ t\in T$, 
\[
\pi( t) \not\in \overline{\mbox{co}} \{ \pi(s) : s\not= t\}
\]
where, as noted in section 2 above, $\overline{\mbox{co}}(A)$ denotes the closed convex hull of the set $A\subseteq {\bf R}^n$. Cr{\' e}mer-McLean (1988) show that convex independence is necessary and sufficient for full extraction. In addition, whenever $\vert S\vert \geq \vert T \vert$, this condition is satisfied generically in many settings. This second result, which implies that full extraction holds generically, is  what gives the results of Cr{\' e}mer-McLean (1988) and McAfee-Reny (1992) so much power in many ways, rather than the characterization results alone. 

To set the stage and give some intuition for the robust versions of these constructions we introduce, we first recall the standard argument for full extraction given convex independence. Suppose beliefs $\{ \pi(t) : t\in T\}$ satisfy convex independence. Fix $t\in T$. Since $\pi( t) \not\in \overline{\mbox{co}} \{ \pi(s) : s\not= t\}$, the separating hyperplane theorem implies that there exists $z(t) \in {\bf R}^S$ such that 
\[
\pi(t) \cdot z(t) = 0
\]
and
\[
\pi(s) \cdot z(t) >0 \ \ \ \forall s\not= t
\]
See Figure 1. Now consider a contract of the form $c(t) = v(t) + \alpha(t) z(t)$ that requires the constant payment $v(t)$ and a stochastic payment equal to some scaled version of $z(t)$. For type $t$, this contract has expected cost $v(t)$:
\[
\pi(t) \cdot c(t) = v(t) + \alpha(t) ( \pi(t)\cdot z(t) ) = v(t)
\]
while for types $s\not= t$, the expected cost is
\[
\pi(s) \cdot c(t) = v(t) + \alpha(t) ( \pi(s) \cdot z(t) ) 
\]
To make this contract unattractive to types $s\not= t$,  set $\alpha(t) >0$ sufficiently large so that
\[
\alpha (t) > \sup_{s\not= t} \frac{v(s) - v(t)}{\pi(s) \cdot z(t)}
\]
Since $\pi(s)\cdot z(t)>0$ for all $s\not= t$ and $T$ is finite, such an $\alpha(t)>0$ exists. Then for $s\not= t$, 
\[
\pi(s) \cdot c(t) = v(t) + \alpha(t) ( \pi(s) \cdot z(t) ) > v(s)
\]
by construction. Repeating this construction for each $t\in T$ yields the collection $C = \{ c(t): t\in T\}$ which achieves full extraction. 

%%Figure 1 %%

\begin{figure}
\centering
\begin{minipage}{0.45\textwidth}
\centering
\begin{tikzpicture}[scale=0.70]
\tikzstyle{every node}=[font=\small]
\draw (1,0) -- (6,7) -- (11,0) -- cycle;

\fill (4,2) circle (1pt) node[left] {$\pi(t_1) \ $};

\fill (6,5) circle (1pt) node[above] {$\pi(t_2)$};

\fill (8,3) circle (1pt) node [below right ] {$\pi(t_3)$};

\fill (6,1) circle (1pt) node [below] {$\pi(t_4)$};

\end{tikzpicture}
\end{minipage}\hfill
\begin{minipage}{0.45\textwidth}
\begin{tikzpicture}[scale=0.70]
\tikzstyle{every node}=[font=\small]
\draw (1,0) -- (6,7) -- (11,0) -- cycle;

\fill (4,2) circle (1pt) node[left] {$\pi(t_1) \ $};

\fill (6,5) circle (1pt) node[above] {$\pi(t_2)$};

\fill (8,3) circle (1pt) node [below right ] {$\pi(t_3)$};

\fill (6,1) circle (1pt) node [below] {$\pi(t_4)$};

\coordinate (t1) at (4,2);
\coordinate (t2) at (6,5);
\coordinate (t3) at (8,3);
\coordinate (t4) at (6,1);

\path[draw, fill=gray!30]  (t2) -- (t3) -- (t4) -- cycle;
\end{tikzpicture}
\end{minipage}
\caption{Beliefs $\{ \pi(t)\in \Delta(S) : t\in T\}$ satisfying convex independence.}
\end{figure}
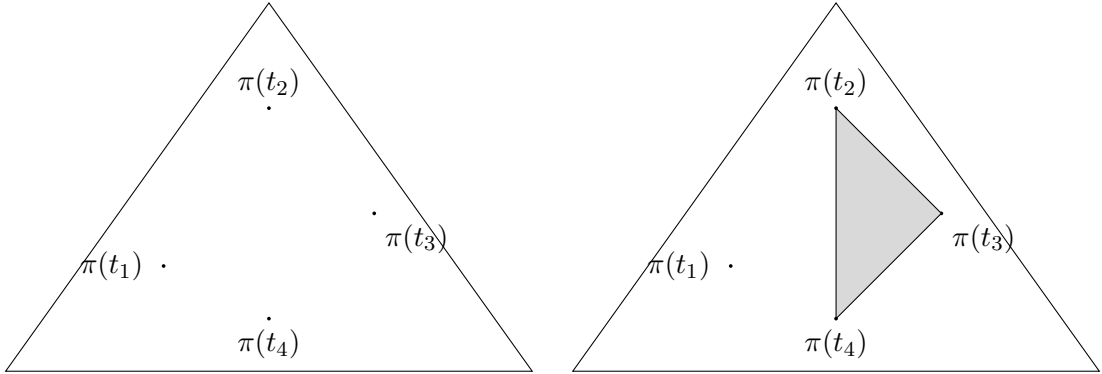

%% %%

Now let $\varepsilon >0$, and suppose the uncertainty in the model is captured by $\varepsilon$-contamination of the original beliefs $\{ \pi(t): t\in T\}$. 
That is, for each $t\in T$, let 
\[
\Pi_\varepsilon(t) = \{ \pi\in \Delta(S): \pi = (1-\varepsilon)\pi(t) + \varepsilon \pi', \ \ \pi'\in \Delta(S)\}
\]
In this case, to extract all of the surplus from a given type $t\in T$ the designer must choose a set of contracts $C = \{ c(t) \in {\bf R}^S: t\in T\}$ with the property that for each $t\in T$, 
\[
\pi \cdot c(t) \leq \pi \cdot c(s) \ \ \forall \pi \in \Pi_\varepsilon(t)
\]
and
\[
v(t) - \pi\cdot c(t) \geq 0 \ \ \forall \pi \in \Pi_\varepsilon(t) \mbox{ with } \pi\cdot c(t) = v(t) \ \mbox{ for some } \pi\in \Pi_\varepsilon(t)
\]
Given that the beliefs $\{ \pi(t) : t\in T\}$ satisfy convex independence, for $\varepsilon$ sufficiently small full extraction is still possible. We can establish this by a natural modification of the standard argument, based on the observation that for $\varepsilon >0$ sufficiently small, for each $t\in T$, 
\[
\Pi_\varepsilon(t) \cap (\overline{\mbox{co}}(\cup_{s\not= t} \Pi_\varepsilon(s)) = \emptyset
\]
See Figure 2. Then, mirroring the previous argument, for each $t\in T$ there exists $z(t) \in {\bf R}^S$ such that 
\[
\pi\cdot z(t) \leq 0 \ \ \forall \pi\in \Pi_\varepsilon(t)
\]
and
\[
\pi\cdot z(t) > 0 \ \ \forall \pi\in \Pi_\varepsilon(s), \ \ \forall s\not= t
\]
Note that $z(t)$ can be chosen so that in addition, $\pi\cdot z(t) = 0$ for some $\pi\in \Pi_\varepsilon(t)$; we assume $z(t)$ has been so chosen. 

%% Figure 2 %%

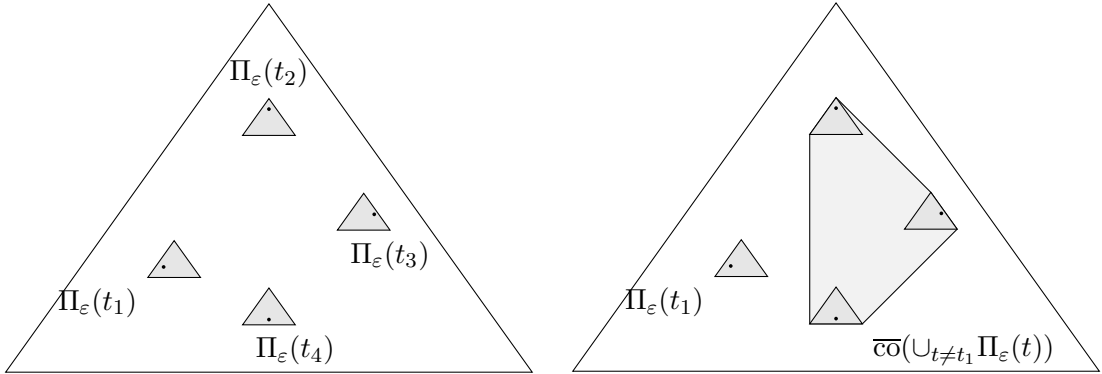
\begin{figure}
\centering
\begin{minipage}{0.45\textwidth}
\centering
\begin{tikzpicture}[scale=0.70]
\tikzstyle{every node}=[font=\small]

\draw (1,0) -- (6,7) -- (11,0) -- cycle;

\path[draw, fill=gray!20] (3.70, 1.8) node[below left] {$\Pi_\varepsilon(t_1)$} -- (4.2, 2.5) -- (4.7, 1.8) -- cycle;
\path[draw, fill=gray!20] (5.5, 4.5) -- (6, 5.2) node [above] {$\Pi_\varepsilon(t_2)$} -- (6.5, 4.5) -- cycle;
\path[draw, fill=gray!20] (7.3, 2.7) -- (7.8, 3.4) -- (8.3, 2.7) node[below] {$\Pi_\varepsilon(t_3)$} -- cycle;
\path[draw, fill=gray!20] (5.5, .9) -- (6, 1.6) -- (6.5, .9) node[below] {$\Pi_\varepsilon(t_4)$} -- cycle;

\fill (4,2) circle (1pt);

\fill (6,5) circle (1pt);

\fill (8,3) circle (1pt);

\fill (6,1) circle (1pt);

\end{tikzpicture}
\end{minipage}\hfill
\begin{minipage}{0.45\textwidth}
\begin{tikzpicture}[scale=0.70]
\tikzstyle{every node}=[font=\small]

\draw (1,0) -- (6,7) -- (11,0) -- cycle;

\path[draw, fill=gray!10] (5.5, .9) -- (5.5, 4.5) -- (6, 5.2) -- (7.8, 3.4) -- (8.3, 2.7) -- (6.5, .9) -- cycle; 

\path[draw, fill=gray!20] (3.70, 1.8) node[below left] {$\Pi_\varepsilon(t_1)$} -- (4.2, 2.5) -- (4.7, 1.8) -- cycle;
\path[draw, fill=gray!20] (5.5, 4.5) -- (6, 5.2)  -- (6.5, 4.5) -- cycle;
\path[draw, fill=gray!20] (7.3, 2.7) -- (7.8, 3.4) -- (8.3, 2.7)  -- cycle;
\path[draw, fill=gray!20] (5.5, .9) -- (6, 1.6) -- (6.5, .9) node[below right] {$\overline{\mbox{co}}(\cup_{t\not= t_1}\Pi_\varepsilon(t)$)} -- cycle;

\fill (4,2) circle (1pt);

\fill (6,5) circle (1pt);

\fill (8,3) circle (1pt);

\fill (6,1) circle (1pt);

\end{tikzpicture}
\end{minipage}
\caption{Beliefs $\{ \Pi_\varepsilon(t)\subseteq \Delta(S) : t\in T\}$ satisfying convex independence.}
\end{figure}

%% %%

As above, consider a contract of the form $c(t) = v(t) + \alpha(t) z(t)$ that requires the constant payment $v(t)$ and a stochastic payment equal to  some scaled version of $z(t)$. For type $t$, for any $\alpha(t) \geq 0$ this contract has expected cost no more than $v(t)$:
\[
\pi(t) \cdot c(t) = v(t) + \alpha(t) ( \pi(t)\cdot z(t) ) \leq v(t) \ \ \forall \pi\in \Pi_\varepsilon(t)
\]
while for types $s\not= t$, the expected cost is
\[
\pi(s) \cdot c(t) = v(t) + \alpha(t) ( \pi(s) \cdot z(t) ) \ \ \forall \pi\in \Pi_\varepsilon(s)
\]
To make this contract unattractive to types $s\not= t$,  set $\alpha(t) >0$ sufficiently large so that
\[
\alpha (t) > \sup_{\substack{{\pi \in \Pi_\varepsilon(s)}\\ {s\not= t}}} \frac{v(s) - v(t)}{\pi \cdot z(t)}
\]
Since $\pi\cdot z(t)>0$ for all $\pi\in \Pi_\varepsilon(s)$ and $s\not= t$, and $T$ is finite, such an $\alpha(t)>0$ exists. Then for $s\not= t$, 
\[
\pi(s) \cdot c(t) = v(t) + \alpha(t) ( \pi(s) \cdot z(t) ) > v(s) \ \ \forall \pi\in \Pi_\varepsilon(s)
\]
by construction. For type $t$, 
\[
\pi(t)\cdot c(t) \leq v(t) \ \ \forall \pi\in \Pi_\varepsilon(t), \ \mbox{ with } \pi\cdot c(t) = v(t) \ \mbox{ for some } \pi\in \Pi_\varepsilon(t)
\]
Repeating this construction for each $t\in T$ yields the collection $C = \{ c(t): t\in T\}$ which achieves full extraction. 

A second immediate observation is that for $\varepsilon >0$ sufficiently large, full extraction becomes impossible. This follows because for sufficiently large $\varepsilon$, the sets $\{ \Pi_\varepsilon(t) : t\in T\}$ will overlap sufficiently to make full extraction impossible. In particular, for $\varepsilon $ sufficiently large, for some $t\in T$, 
\[
\overline{\mbox{co}}(\cup_{s\not= t} \Pi_\varepsilon(s)) \subseteq \Pi_\varepsilon(t) 
\]
In this case, there is some $s\not= t$ such that $\Pi_\varepsilon(s) \subseteq \Pi_\varepsilon(t)$. To see that full extraction then is impossible, suppose $v(s) > v(t)$. Then for any contract $c(t)$ for which 
\[
\pi\cdot c(t) \leq v(t) \ \ \forall \pi\in \Pi_\varepsilon(t)
\]
it must also be the case that 
\[
\pi\cdot c(t) \leq v(t) < v(s) \ \ \forall \pi\in \Pi_\varepsilon(s)
\]
Thus it is impossible for the designer to achieve full extraction  in this case. 

Starting from a fixed set of beliefs $\{ \pi(t): t\in T\}$ satisfying convex independence, full extraction then remains possible for a degree of uncertainty $\varepsilon >0$ sufficiently small, but eventually becomes impossible for $\varepsilon$ sufficiently large. Given a fixed degree of uncertainty $\varepsilon >0$, a designer might not be able to achieve full extraction, even when the original beliefs $\{ \pi(t): t\in T\}$ satisfy convex independence. In that case, incentive compatibility can impose strong constraints on the designer. In particular, given $\varepsilon >0$ there is an open set of beliefs $\{\pi(t): t\in T\} \subseteq \Delta(S)^T$ such that for the corresponding uncertain beliefs $\{ \Pi_\varepsilon(t): t\in T\}$, incentive compatibility means the designer can offer only a single contract. 

To see this, let $\varepsilon >0$ be fixed. Suppose $C = \{ c(t) \in {\bf R}^S: t\in T\}$ is \emph{incentive compatible}, that is, for each $t\in T$, 
\[
v(t) - \pi(t) \cdot c(t) \geq v(t) - \pi(t) \cdot c(s) \ \ \forall \pi\in \Pi_\varepsilon(t), \ \ \forall s\not= t
\]
Then for a pair of types $t_1, t_2 \in T$, it must be that 
\[
\pi\cdot c(t_1) \leq \pi\cdot c(t_2) \ \ \forall \pi\in \Pi_\varepsilon(t_1)
\]
and
\[
\pi\cdot c(t_2) \leq \pi\cdot c(t_1) \ \ \forall \pi\in \Pi_\varepsilon(t_2)
\]
Thus 
\[
\pi\cdot (c(t_1) - c(t_2))\leq 0 \ \ \forall \pi\in \Pi_\varepsilon(t_1)
\]
and 
\[
\pi\cdot (c(t_2) - c(t_1)) \leq 0 \ \ \forall \pi\in \Pi_\varepsilon(t_2)
\]
Putting these together yields
\[
\pi\cdot (c(t_1) - c(t_2)) = 0 \ \ \forall \pi\in \Pi_\varepsilon(t_1)\cap \Pi_\varepsilon(t_2)
\]
If $\Pi_\varepsilon(t_1)\cap \Pi_\varepsilon(t_2)$ has full dimension, this implies $c(t_1) - c(t_2) = 0$, that is, $c(t_1) = c(t_2)$.  If this is true for any pair $t, t'\in T$, then $C$ must consist of a single contract. We show below that given $\varepsilon>0$, there is an open set in $\Delta(S)^T$ satisfying convex independence such that $\Pi_\varepsilon(t)\cap \Pi_\varepsilon(t')$ has full dimension for any pair $t, t'\in T$. See Figure 3. 

%% Figure 3 %% 

\begin{figure}
\centering
\begin{tikzpicture}

\draw (1,0) -- (6,7) -- (11,0) -- cycle;

\path[draw, fill=gray!10] (2.8, 1.2)  -- (4.8, 4) -- (6.8, 1.2) -- cycle;
\path[draw, fill=gray!10] (4, 3) -- (6, 5.8)  -- (8, 3) -- cycle;
\path[draw, fill=gray!10] (5.2, 1.8) -- (7.2,4.6) -- (9.2,1.8)  -- cycle;
\path[draw, fill=gray!10] (4, .6) -- (6, 3.4) -- (8, .6)  -- cycle;

\path[draw] (2.8, 1.2) node[below] {$\Pi_\varepsilon(t_1)$} -- (4.8, 4) -- (6.8, 1.2) -- cycle;
\path[draw] (4, 3) -- (6, 5.8) node [above] {$\Pi_\varepsilon(t_2)$} -- (8, 3) -- cycle;
\path[draw] (5.2, 1.8) -- (7.2,4.6) -- (9.2,1.8) node[below] {$\Pi_\varepsilon(t_3)$} -- cycle;
\path[draw] (4, .6) -- (6, 3.4) -- (8, .6) node[below] {$\Pi_\varepsilon(t_4)$} -- cycle;

\fill (4,2) circle (1pt);

\fill (6,5) circle (1pt);

\fill (8,3) circle (1pt);

\fill (6,1) circle (1pt);

\end{tikzpicture}

\caption{$\Pi_\varepsilon(t)\cap \Pi_\varepsilon(t')$ has full dimension for all $t,t'$. }
\end{figure}
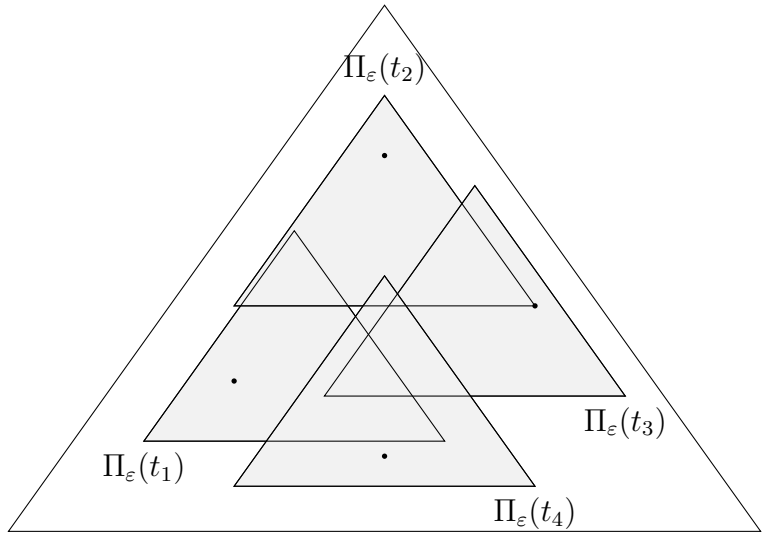

%% %%

Thus for any such beliefs, any incentive compatible collection $C$ contains just one contract $c$. This observation also provides an upper bound on the designer's expected revenue, given additional information about the designer's beliefs. In particular, let $\pi(d)\in \Delta(S)$ denote a belief of the designer. 
Without loss of generality, suppose $v(t_1) $ is the smallest value among the types in $T$, and write the contract $c$ as $c = v(t_1) + z$ for $z\in {\bf R}^S$. Now note that individual rationality requires
\[
\pi\cdot c = v(t_1) + \pi\cdot z \leq v(t_1) \ \ \forall \pi\in \Pi_\varepsilon(t_1)
\]
Thus $\pi\cdot z \leq 0 $ for all $\pi\in \Pi_\varepsilon(t_1)$. Say there is {\it concurrence} if the designer's belief $\pi(d) \in  \Pi_\varepsilon(t_1)$. In this case, we can give a simple upper bound on the designer's expected revenue from any menu satisfying incentive compatibilty and individual rationality: 
\[
T\pi(d) \cdot c \leq T v(t_1)
\]
In this case, an optimal menu for the designer is to offer just the deterministic contract $v(t_1)$. Note that this conclusion also follows if the designer uses a maxmin criterion, as long as the designer uses a set of beliefs $\Pi(d) \subseteq \Delta(S)$ such that $\Pi(d) \cap \Pi_\varepsilon(t_1) \not= \emptyset$. 

Finally, for any fixed upper bound on the degree of uncertainty $\varepsilon >0$, there is an open set of beliefs $\{\pi(t): t\in T\} \subseteq \Delta(S)^T$ satisfying convex independence such that full extraction is not possible for the corresponding uncertain beliefs $\{ \Pi_{\varepsilon_t}(t): t\in T\}$, where $0<\varepsilon_t \leq \varepsilon$ for each $t$. 

We collect these observations in the two propositions below, and provide the additional proofs not given above. 

\begin{proposition}
\begin{enumerate}[(i)]
\item For every $\{ \pi(t): t\in T\} \subseteq \Delta(S)^T$ satisfying convex independence, there exists $\varepsilon >0$ sufficiently small such that full extraction is possible for $\{ \Pi_\varepsilon(t): t\in T\}$.
\item For every   $\{ \pi(t): t\in T\} \subseteq \Delta(S)^T$  there exists $\varepsilon >0$ sufficiently large such that full extraction is impossible for $\{ \Pi_\varepsilon(t): t\in T\}$.
\item For every $\varepsilon >0$, there is an open subset $O_\varepsilon \subseteq \Delta(S)^T$ satisfying convex independence such that for all $\{ \pi(t): t\in T\} \in O_\varepsilon$, full extraction is not possible for some corresponding uncertain beliefs $\{\Pi_{\varepsilon_t}(t): t\in T\}$ with $0<\varepsilon_t\leq \varepsilon$ for each $t\in T$. 
\end{enumerate}
\end{proposition}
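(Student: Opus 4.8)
The plan is to treat the three parts separately, reusing the constructions already displayed in the leading example and supplying the quantitative and topological steps that were deferred.

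For part (i) the approach is to turn the strict separation $\pi(t) \notin \overline{\mbox{co}}\{\pi(s): s\neq t\}$ into a uniform gap that survives small contamination. Writing $K_t = \overline{\mbox{co}}\{\pi(s): s\neq t\}$, convex independence gives $\mbox{dist}(\pi(t), K_t) > 0$ for each $t$, and since $T$ is finite, $\delta := \min_t \mbox{dist}(\pi(t), K_t) > 0$. Because $\Pi_\varepsilon(t)$ is the image of $\Delta(S)$ under the homothety $\pi' \mapsto \pi(t) + \varepsilon(\pi' - \pi(t))$, it lies in the $\varepsilon D$-ball about $\pi(t)$ with $D = \mbox{diam}\,\Delta(S)$; likewise $\overline{\mbox{co}}(\cup_{s\neq t}\Pi_\varepsilon(s))$ lies in the (convex) $\varepsilon D$-neighborhood of $K_t$. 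Hence for $\varepsilon < \delta/(2D)$ these sets are disjoint, which is exactly the displayed claim $\Pi_\varepsilon(t) \cap \overline{\mbox{co}}(\cup_{s\neq t}\Pi_\varepsilon(s)) = \emptyset$. Strict separation of these disjoint compact convex sets then yields $z(t)$, which I would calibrate by subtracting a multiple of the all-ones vector ${\bf 1}$ (using $\pi\cdot{\bf 1}=1$) so that $\max_{\pi\in\Pi_\varepsilon(t)}\pi\cdot z(t)=0$ while $\pi\cdot z(t)>0$ on each $\Pi_\varepsilon(s)$; the contract $c(t) = v(t) + \alpha(t) z(t)$ with $\alpha(t)$ large then closes the argument as in the example.

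For part (ii) I would use the robust intersection argument rather than the containment picture, since it is cleaner and is what actually forces the failure. The key observation is: if $\Pi(s)\cap\Pi(t)\neq\emptyset$ and $v(t)>v(s)$, then full extraction is impossible, because individual rationality for $s$ forces $\pi\cdot c(s)\leq v(s)$ for all $\pi\in\Pi(s)$, while the incentive constraint that type $t$ not prefer $c(s)$ forces $\pi\cdot c(s)\geq v(t)$ for all $\pi\in\Pi(t)$, so any common $\pi$ yields $v(t)\leq v(s)$. It then suffices to show the contaminated sets share a point for $\varepsilon$ near $1$. Taking the uniform distribution $u$, one checks $u\in\Pi_\varepsilon(t)$ exactly when $1-\varepsilon\leq (S\max_i \pi(t)_i)^{-1}$, so for $\varepsilon$ close to $1$ every $\Pi_\varepsilon(t)$ contains $u$; choosing any non-constant $v$ (the constant case being trivially extractable) and a pair with $v(t)>v(s)$ finishes part (ii).

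Part (iii) is the genuinely new content and, I expect, the main obstacle, since it must produce an \emph{open} set of convex-independent configurations on which extraction fails for \emph{arbitrarily small} contamination. The plan is to build one explicit configuration and then open it up. Assuming $S$ is large enough to place $T$ beliefs in strictly convex position (e.g.\ on a strictly convex curve in $\mbox{rint}\,\Delta(S)$, which needs $S\geq 3$), I would take all base beliefs in strictly convex position — so convex independence holds — while keeping two of them, $\pi(t_1)$ and $\pi(t_2)$, distinct but so close that $\Pi_\varepsilon(t_1)\cap\Pi_\varepsilon(t_2)$ has nonempty interior; solving $(1-\varepsilon)(\pi(t_1)-\pi(t_2))=\varepsilon(b-a)$ for $a,b\in\Delta(S)$ shows this holds once $\tfrac12\|\pi(t_1)-\pi(t_2)\|_1 < \varepsilon/(1-\varepsilon)$, which leaves ample room for any $\varepsilon>0$. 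Both defining properties are open — convex independence because it is a finite conjunction of strict separations, and nonempty-interior intersection because it persists under small perturbation of the continuously varying sets $\Pi_\varepsilon(t_1),\Pi_\varepsilon(t_2)$ — so I would let $O_\varepsilon$ be their intersection. Setting $\varepsilon_{t_1}=\varepsilon_{t_2}=\varepsilon$ (any $\varepsilon_t\in(0,\varepsilon]$ for the rest) and any $v$ with $v(t_2)>v(t_1)$, the intersection argument of part (ii) rules out full extraction throughout $O_\varepsilon$. The delicate points to get right are the simultaneous openness of the two conditions and the compatibility of convex independence with placing two beliefs inside the $O(\varepsilon)$ intersection threshold, which is precisely why strictly convex position rather than mere distinctness is used.
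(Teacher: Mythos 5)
Your proposal is correct, and while part (i) follows essentially the paper's own route, parts (ii) and (iii) rest on a genuinely different impossibility mechanism. For (i), the paper simply asserts that for small $\varepsilon$ the contaminated sets inherit convex independence; you supply the uniform gap estimate $\varepsilon < \delta/(2D)$, and your normalization of $z(t)$ by subtracting $\bigl(\max_{\pi\in\Pi_\varepsilon(t)}\pi\cdot z(t)\bigr)\mathbf{1}$ is exactly what the paper leaves implicit when it says $z(t)$ ``can be chosen'' so that $\pi\cdot z(t)=0$ for some $\pi\in\Pi_\varepsilon(t)$. The divergence is in (ii) and (iii): the paper's arguments there are \emph{containment}-based --- for large $\varepsilon$ it argues $\overline{\mbox{co}}(\cup_{s\not=t}\Pi_\varepsilon(s))\subseteq \Pi_\varepsilon(t)$, hence $\Pi_\varepsilon(s)\subseteq\Pi_\varepsilon(t)$ for some $s\not=t$, and derives a contradiction with $v(s)>v(t)$; and for (iii) it places $\pi(t_1)\in\mbox{rint}\,\Pi_\varepsilon(t_2)$ and shrinks $\varepsilon_{t_1}$ until $\Pi_{\varepsilon_{t_1}}(t_1)\subseteq\Pi_\varepsilon(t_2)$. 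You instead isolate an \emph{intersection} lemma: if $\Pi(s)\cap\Pi(t)\not=\emptyset$ and $v(t)>v(s)$, full extraction fails, since individual rationality for $s$ forces $\pi\cdot c(s)\leq v(s)$ while the robust constraint that $t$ not gain from $c(s)$ forces $\pi\cdot c(s)\geq v(t)$ at any common $\pi$. This lemma is strictly stronger than the paper's containment step (it makes pairwise disjointness of beliefs necessary for full extraction, which is sharper than the failure of convex dependence required by the paper's Theorem 3), and it buys you cleaner quantitative statements: the explicit threshold $\varepsilon \geq 1-1/S$ in (ii) via the uniform distribution, and in (iii) the freedom to take $\varepsilon_{t_1}=\varepsilon_{t_2}=\varepsilon$ rather than an unspecified small $\varepsilon_{t_1}$, with openness of $O_\varepsilon$ transparent because $\Pi_\varepsilon(t)=(1-\varepsilon)\pi(t)+\varepsilon\Delta(S)$ merely translates as $\pi(t)$ moves, so a common ball in the intersection survives small perturbations. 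What the paper's containment route buys, by contrast, is a direct bridge from the example to the convex-dependence condition formalized later in Theorem 3. Two minor points to fix in your write-up: ``nonempty interior'' in (iii) should be nonempty \emph{relative} interior, since $\Delta(S)$ has empty interior in ${\bf R}^S$; and your appeal to strictly convex position (needing $\vert S\vert\geq 3$) is the analogue of the existence assumption the paper's proof also makes implicitly when it chooses a convex-independent configuration with $\pi(t_1)\in\mbox{rint}\,\Pi_\varepsilon(t_2)$.
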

\begin{proof}
For (i), fix $\{ \pi(t): t\in T\} \subseteq \Delta(S)^T$ satisfying convex independence. Fix $t\in T$. By convex independence, 
\[
\pi(t) \not\in \overline{\mbox{co}}\{ \pi(s) : s\not= t\}
\]
Then there exists $\varepsilon >0$ sufficiently small such that for $\pi'(t) \in \Pi_\varepsilon(t)$ and $\pi'(s)\in \Pi_\varepsilon(s), s\not= t$, 
\[
\pi'(t) \not\in \overline{\mbox{co}}\{ \pi'(s) : s\not= t\}
\]
that is,
\[
\Pi_\varepsilon(t) \cap \overline{\mbox{co}} (\cup_{s\not= t} \Pi_\varepsilon(s)) = \emptyset
\]
Repeating this argument for each $t$ and using the finiteness of $T$ establishes the claim.

For (iii), fix $\varepsilon >0$ and choose $\{ \pi(t)\in \Delta(S): t\in T\}$ satisfying convex independence with $\pi(t)\gg 0$ for each $t\in T$, and such that $\pi(t_1) \in \mbox{rint} \Pi_\varepsilon(t_2)$. Then there exists $\varepsilon_{t_1} >0$ such that $\Pi_{\varepsilon_{t_1}}(t_1) \subseteq \Pi_\varepsilon(t_2)$. Setting $\varepsilon_t = \varepsilon$ for $t\not= t_1$, then by the argument above, full extraction is not possible for the beliefs $\{ \Pi_{\varepsilon_t}(t): t\in T\}$. Moreover, there is an open set in $\Delta(S)^T$ containing $\{ \pi(t)\in \Delta(S): t\in T\}$ satisfying convex independence on which the same argument holds. 
\end{proof}

\begin{proposition}
For each $\varepsilon>0$, there is an open subset $O_\varepsilon \subseteq \Delta(S)^T$ satisfying convex independence such that for all $\{ \pi(t): t\in T\} \in O_\varepsilon$, a menu $C= \{ c(t) \in {\bf R}^S: t\in T\}$ is incentive compatible for $\{ \Pi_\varepsilon(t): t\in T\}$ only if $c(t) = c(t')$ for all $t, t'\in T$. 

Given $\varepsilon>0$ and $\{ \pi(t): t\in T\} \in O_\varepsilon$, if the designer's beliefs are concurrent with $\Pi_\varepsilon(t_1)$ where $v(t_1) = \min_{t\in T} v(t)$, then the designer's expected revenue from any incentive compatible and individually rational menu $C$ is less than or equal to $Tv(t_1)$, and the deterministic contract $v(t_1)$ is optimal for the designer. 
\end{proposition}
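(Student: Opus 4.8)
The plan is to build directly on the reduction already carried out in the text and then supply the one ingredient that was only promised there, namely the open set $O_\varepsilon$. Recall that incentive compatibility was shown to force, for every pair $t,t'$,
\[
\pi\cdot(c(t)-c(t')) = 0 \quad \forall\, \pi\in\Pi_\varepsilon(t)\cap\Pi_\varepsilon(t'),
\]
and that when $\Pi_\varepsilon(t)\cap\Pi_\varepsilon(t')$ has full dimension this yields $c(t)=c(t')$. Hence the first statement reduces entirely to producing an open set of convex-independent configurations on which every pairwise intersection $\Pi_\varepsilon(t)\cap\Pi_\varepsilon(t')$ is full-dimensional; the revenue statement is then a short computation.

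For the geometry, I would note that $\Pi_\varepsilon(t) = (1-\varepsilon)\pi(t)+\varepsilon\Delta(S)$ is a translate, inside the hyperplane $\{x:\sum_s x_s = 1\}$, of the fixed body $\varepsilon\Delta(S)$, so $\Pi_\varepsilon(t')$ is obtained from $\Pi_\varepsilon(t)$ by the translation $w = (1-\varepsilon)(\pi(t')-\pi(t))$, a vector in the direction space $\{y:\sum_s y_s=0\}$. Since $\varepsilon>0$, the set $\varepsilon\Delta(S)$ has positive relative inradius $\rho=\varepsilon\rho_0$, so $\Pi_\varepsilon(t)$ contains a relative ball $B(p,\rho)$. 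If $\|w\|<\rho$, then $p+\tfrac{1}{2}w$ lies in both $\Pi_\varepsilon(t)$ and $\Pi_\varepsilon(t')=\Pi_\varepsilon(t)+w$, and a small relative ball about it lies in both; thus $\Pi_\varepsilon(t)\cap\Pi_\varepsilon(t')$ has nonempty relative interior, which for a convex subset of $\Delta(S)$ is equivalent to full dimension by the remark in Section 2. To build the configuration, I would start from any $\{\hat\pi(t)\}\subseteq\mbox{rint}\,\Delta(S)$ satisfying convex independence and shrink it toward a fixed $\pi^*\in\mbox{rint}\,\Delta(S)$ via $\pi(t)=\pi^*+\lambda(\hat\pi(t)-\pi^*)$. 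Because $x\mapsto\pi^*+\lambda(x-\pi^*)$ is an injective affine map, convex independence is preserved for every $\lambda\in(0,1]$, while for small $\lambda$ all the $\pi(t)$ lie within a ball of radius $\delta=\lambda\max_t\|\hat\pi(t)-\pi^*\|$, so every $\|w\|\le 2(1-\varepsilon)\delta<2\delta$. Taking $\lambda$ small enough that $2\delta<\rho$ makes all pairwise intersections full-dimensional.

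Openness of $O_\varepsilon$ follows because both defining properties are stable under perturbation of $\{\pi(t)\}$: convex independence persists since each $\pi(t)$ stays at positive distance from the compact set $\overline{\mbox{co}}\{\pi(s):s\neq t\}$, and full dimension of each intersection was obtained through the strict inequality $\|w\|<\rho$ together with an open ball lying in the intersection, both of which survive small changes in the $\pi(t)$. For the second statement, the first part forces any incentive compatible menu to a single contract $c$ with $c(t)=c$ for all $t$. Writing $c=v(t_1)+z$, individual rationality of type $t_1$ gives $\pi\cdot c\le v(t_1)$, hence $\pi\cdot z\le 0$, for all $\pi\in\Pi_\varepsilon(t_1)$; concurrence, $\pi(d)\in\Pi_\varepsilon(t_1)$, then gives $\pi(d)\cdot z\le 0$, so $\pi(d)\cdot c\le v(t_1)$ and the expected revenue $T\pi(d)\cdot c\le Tv(t_1)$. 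Since the constant contract $(v(t_1),\ldots,v(t_1))$ is trivially incentive compatible, is individually rational because $v(t)\ge v(t_1)$ for all $t$, and attains revenue exactly $Tv(t_1)$, it is optimal.

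The main obstacle is the apparent tension in the first part between convex independence, which pushes the base beliefs apart, and full-dimensional overlap of the contaminated sets, which pulls them together. The resolution is that $\varepsilon$-contamination enlarges each belief by a fixed amount of order $\varepsilon$, independent of how close the base points are, so it suffices to take distinct base points within distance $O(\varepsilon)$ of one another; the remaining care lies in verifying that both properties hold simultaneously on a full neighborhood, i.e.\ that $O_\varepsilon$ is genuinely open rather than merely nonempty.
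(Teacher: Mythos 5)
Your proof is correct, and it follows the same skeleton as the paper's: both reduce part one to the observation (already in the text) that incentive compatibility forces $\pi\cdot(c(t)-c(t'))=0$ on $\Pi_\varepsilon(t)\cap\Pi_\varepsilon(t')$, so that full-dimensional pairwise intersections collapse the menu to a single contract, and both dispose of the revenue claim by the same short computation with the constant contract $v(t_1)$. The only genuine difference is how the open set $O_\varepsilon$ is built. The paper anchors the construction at one type: it places $\pi(t_1)$ in $\mbox{rint}\,\Delta(S)$, chooses the remaining $\pi(s)$ convex independent with $\pi(t_1)\in\mbox{rint}\,\Pi_\varepsilon(s)$ for every $s\neq t_1$, and extracts a single common ball $B_\delta(\pi(t_1))\subseteq\cap_t\Pi_\varepsilon(t)$, then asserts the argument survives perturbation by some $\alpha>0$. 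You instead exploit the translation structure of $\varepsilon$-contamination ($\Pi_\varepsilon(t)$ are translates of the fixed body $\varepsilon\Delta(S)$) and shrink an arbitrary convex-independent configuration by a homothety until its diameter falls below the inradius $\varepsilon\rho_0$, so that every pairwise intersection contains a ball. Your version buys two things the paper leaves implicit: the existence of a clustered convex-independent configuration is derived (injective affine maps preserve convex independence) rather than asserted, and openness follows from an explicitly open condition (the strict inequality $\Vert w\Vert<\rho$ for all pairs, plus stability of convex independence), rather than from "the same argument applies nearby.'' The paper's construction is shorter and yields the slightly stronger conclusion that all the sets share a common full-dimensional ball, though only pairwise overlap is needed. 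Note that both arguments share the same implicit assumption that a convex-independent configuration of $|T|$ interior points exists (which requires the simplex to have dimension at least two when $|T|\geq 3$), so this is not a gap relative to the paper.
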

\begin{proof}
The second claim follows from the argument given in the text above. For the first claim, fix $\varepsilon >0$. Let $t_1\in T$ be fixed. Choose $\pi(t_1)\in \Delta(S)$ such that $\Pi_{\varepsilon}(t_1) \subseteq \mbox{rint} \Delta(S)$.  Then choose $\pi(s), s\not=t_1$ such that  $\{ \pi(t)\in \Delta(S): t\in T\}$ satisfies convex independence and such that $\pi(t_1)\in \mbox{rint}\Pi_\varepsilon(s)$ for each $s\not= t_1$. Then for each $s$ there exists $\delta_s>0$ such that $B_{\delta_s}(\pi(t_1)) \subseteq \Pi_\varepsilon(s)$, where for $\pi\in \Delta(S)$ and $\beta >0$, $B_\beta(\pi)$ denotes the ball of radius $\beta$ around $\pi$ in $\Delta(S)$, so $B_\beta(\pi) = \{ \pi'\in \Delta(S): \Vert \pi' -\pi\Vert <\beta\}$. Setting $\delta = \min_{s}\delta_s >0$, this implies $B_\delta(\pi(t_1)) \subseteq \cap_{t\in T} \Pi_\varepsilon(t)$. In particular, for any $t, t'\in T$,  $B_\delta(\pi(t_1)) \subseteq \Pi_\varepsilon(t)\cap \Pi_\varepsilon(t')$. Since $B_\delta(\pi(t_1)) $ has full dimension, this shows that  $\Pi_\varepsilon(t)\cap \Pi_\varepsilon(t')$ has full dimension for any $t, t'\in T$. Moreover, there exists $\alpha>0$ sufficiently small such that  the same argument applies to any $\{ \pi'(t)\in \Delta(S): t\in T\}$ with $\pi'(t) \in B_\alpha(\pi(t))$ for each $t\in T$. 
\end{proof}

\section{Surplus Extraction}

In this section we turn to the general model. We first give analogues of the classic results of Cr{\' e}mer-McLean (1988) and McAfee and Reny (1992) for the general setting in which beliefs can be arbitrary closed, convex sets, based on the notions of full extraction and weak full extraction. Next we connect these notions to choice behavior under Knightian uncertainty, and provide alternative results for notions of full extraction motivated by such behavior. We then consider limits on the designer's ability to extract information rents when these results do not apply. In particular, we show that when types' beliefs are sufficiently overlapping, any incentive compatible  menu of contracts contains a unique contract. 

We start by developing two versions of the classic convex independence condition for sets of beliefs.  

\begin{definition}
Beliefs $\{ \Pi(t)\subseteq \Delta(S) : t\in T\}$ satisfy \emph{convex independence} if 
\[
\Pi(t) \cap \overline{\mbox{co}}( \cup_{s\not=t} \Pi(s)) = \emptyset \ \ \forall t\in T
\]
\end{definition}

\bigskip

\begin{definition}
Beliefs $\{ \Pi(t)\subseteq \Delta(S) : t\in T\}$ satisfy \emph{weak convex independence} if there exists $\{ \pi(t)\in \Pi(t): t\in T\}$ satisfying convex independence, that is, such that
\[
\pi(t) \not\in \overline{\mbox{co}} \{ \pi(s): s\not= t\} \ \ \forall t\in T
\]
\end{definition}

\bigskip

Note that convex independence for the collection $\{ \Pi(t): t\in T\}$ is equivalent to the requirement that every selection $\{\pi(t)\in \Delta(S): \pi(t) \in \Pi(t) \ \forall t\in T\}$ satisfies convex independence, while weak convex independence requires just that some  such selection satisfies convex independence. When the set $\Pi(t)$ is a singleton for each $t\in T$, these notions are equivalent, and are equivalent to the standard condition of Cr{\' e}mer-McLean (1988) and McAfee-Reny (1992). 

Our first main results show that weak full extraction holds whenever beliefs satisfy weak convex independence, while full extraction holds whenever  beliefs satisfy convex independence. 

\begin{theorem}
If beliefs $\{ \Pi(t) \subseteq \Delta(S) : t\in T\}$ satisfy weak convex independence, then weak full extraction holds. 
\end{theorem}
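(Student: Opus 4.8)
The plan is to reduce the statement directly to the classical Cr\'emer--McLean construction recalled in Section 3, applied to the selection of beliefs guaranteed by weak convex independence. The crucial observation is that weak full extraction asks, for each type $t$, only for a \emph{single} belief $\pi(t)\in\Pi(t)$ at which the individual rationality equality binds and the incentive inequalities hold; and weak convex independence supplies exactly such a selection. The same selection will serve both roles.

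First I would invoke weak convex independence to fix a selection $\{\pi(t)\in\Pi(t):t\in T\}$ with $\pi(t)\not\in\overline{\mbox{co}}\{\pi(s):s\not=t\}$ for every $t$. For each fixed $t$, since $\pi(t)$ lies outside the compact convex set $\overline{\mbox{co}}\{\pi(s):s\not=t\}$, the separating hyperplane theorem yields $z(t)\in{\bf R}^S$ with $\pi(t)\cdot z(t)=0$ and $\pi(s)\cdot z(t)>0$ for all $s\not=t$, just as in the unique-prior case; the normalization $\pi(t)\cdot z(t)=0$ can be arranged by subtracting an appropriate constant vector, using that every $\pi\in\Delta(S)$ sums to one.

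Next I would set $c(t)=v(t)+\alpha(t)z(t)$ with $\alpha(t)>0$ chosen so large that $\alpha(t)>\sup_{s\not=t}\frac{v(s)-v(t)}{\pi(s)\cdot z(t)}$, which is possible because $T$ is finite and each $\pi(s)\cdot z(t)>0$ for $s\not=t$. I would then verify the two weak full extraction conditions at the selected beliefs. The individual rationality equality is immediate, since $\pi(t)\cdot c(t)=v(t)+\alpha(t)(\pi(t)\cdot z(t))=v(t)$. For the incentive inequality, fix $s\not=t$ and compute $\pi(t)\cdot c(s)=v(s)+\alpha(s)(\pi(t)\cdot z(s))$; taking the index $t$ in the supremum defining $\alpha(s)$ gives $\alpha(s)(\pi(t)\cdot z(s))>v(t)-v(s)$, hence $\pi(t)\cdot c(s)>v(t)$, so that $v(t)-\pi(t)\cdot c(s)\le 0$ as required.

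The argument is essentially a relabeling of the classical one, so I do not anticipate a genuine obstacle; the only point requiring care is the bookkeeping, namely keeping the same selection $\{\pi(t)\}$ as the witness of weak convex independence and as the beliefs at which the extraction conditions are evaluated. It is worth noting that the incentive inequalities in fact come out strict, which reflects that under weak full extraction the menu need only defeat the single belief $\pi(t)$ at each type rather than all of $\Pi(t)$; this contrast is precisely what distinguishes the present argument from the one needed for the (stronger) full extraction result under convex independence.
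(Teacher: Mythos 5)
Your proof is correct and follows essentially the same route as the paper's: the selection guaranteed by weak convex independence, the separating vectors $z(t)$ normalized so that $\pi(t)\cdot z(t)=0$, the contracts $c(t)=v(t)+\alpha(t)z(t)$ with $\alpha(t)$ exceeding the same supremum, and verification of the extraction conditions at the selected beliefs. The only cosmetic difference is bookkeeping — the paper checks that $c(t)$ is unattractive to each $s\not=t$ (i.e.\ $\pi(s)\cdot c(t)>v(s)$) while you check that each $c(s)$, $s\not=t$, is unattractive to $t$ (i.e.\ $\pi(t)\cdot c(s)>v(t)$) — which is the identical computation with labels swapped.
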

\begin{proof}
Fix $v:T\to {\bf R}$. By weak convex independence, there exists a selection $\{ \pi(t) \in \Delta(S): \pi(t) \in \Pi(t) \ \ \forall t\in T\}$ satisfying convex independence. Then fix $t\in T$. Since $\pi( t) \not\in \overline{\mbox{co}} \{ \pi(s) : s\not= t\}$, there exists $z(t) \in {\bf R}^S$ such that 
\[
\pi(t) \cdot z(t) = 0
\]
and
\[
\pi(s) \cdot z(t) >0 \ \ \ \forall s\not= t
\]
Now consider a contract of the form $c(t) = v(t) + \alpha(t) z(t)$ that requires the constant payment $v(t)$ and a stochastic payment some scaled version of $z(t)$. For type $t$, this contract has expected cost $v(t)$:
\[
\pi(t) \cdot c(t) = v(t) + \alpha(t) ( \pi(t)\cdot z(t) ) = v(t)
\]
while for types $s\not= t$, the expected cost is
\[
\pi(s) \cdot c(t) = v(t) + \alpha(t) ( \pi(s) \cdot z(t) ) 
\]
To make this contract unattractive to types $s\not= t$,  set $\alpha(t) >0$ sufficiently large so that
\[
\alpha (t) > \sup_{s\not= t} \frac{v(s) - v(t)}{\pi(s) \cdot z(t)}
\]
Since $\pi(s)\cdot z(t)>0$ for all $s\not= t$ and $T$ is finite, such an $\alpha(t)>0$ exists. Then for $s\not= t$, 
\[
\pi(s) \cdot c(t) = v(t) + \alpha(t) ( \pi(s) \cdot z(t) ) > v(s)
\]
by construction. Repeating this construction for each $t\in T$ yields the collection $C = \{ c(t): t\in T\}$, which achieves weak full extraction. 
\end{proof}

\begin{theorem}
If beliefs $\{ \Pi(t) \subseteq \Delta(S) : t\in T\}$ satisfy convex independence, then full extraction holds. 
\end{theorem}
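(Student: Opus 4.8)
The plan is to run the same separate-and-scale construction used in the proof of Theorem 1 and in the $\varepsilon$-contamination example, but now strictly separating the entire belief set $\Pi(t)$ from the convex hull of the remaining belief sets rather than separating single points. Fix $v:T\to\Reals$ and fix a type $t\in T$, and write $K(t)=\overline{\mbox{co}}(\cup_{s\not=t}\Pi(s))$. The first step is to record that both $\Pi(t)$ and $K(t)$ are nonempty, compact, and convex: each $\Pi(s)$ is a closed (hence compact) convex subset of the compact set $\Delta(S)$, the finite union $\cup_{s\not=t}\Pi(s)$ is therefore compact, and in $\Reals^S$ the closed convex hull of a compact set is compact. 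Convex independence gives exactly $\Pi(t)\cap K(t)=\emptyset$.

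Since $\Pi(t)$ and $K(t)$ are disjoint, nonempty, compact, and convex, the separating hyperplane theorem yields strict separation: there is $z_0(t)\in\Reals^S$ with $\sup_{\pi\in\Pi(t)}\pi\cdot z_0(t)<\inf_{\pi'\in K(t)}\pi'\cdot z_0(t)$, and both extrema are attained by compactness. The second step is to normalize. Setting $a=\max_{\pi\in\Pi(t)}\pi\cdot z_0(t)$ and identifying the scalar $a$ with the constant vector $(a,\ldots,a)$, define $z(t)=z_0(t)-a$. Because every $\pi\in\Delta(S)$ satisfies $\pi\cdot(a,\ldots,a)=a$, this shift gives $\pi\cdot z(t)\leq 0$ for all $\pi\in\Pi(t)$, with equality at the maximizer, while $\pi'\cdot z(t)>0$ for all $\pi'\in K(t)$, and in particular $\pi\cdot z(t)>0$ for every $\pi\in\Pi(s)$ with $s\not=t$.

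With $z(t)$ in hand the construction is exactly as in Theorem 1: put $c(t)=v(t)+\alpha(t)z(t)$ for a scalar $\alpha(t)\geq 0$ to be chosen. Individual rationality is then automatic, since $\pi\cdot c(t)=v(t)+\alpha(t)(\pi\cdot z(t))\leq v(t)$ for all $\pi\in\Pi(t)$, with equality at the belief where $\pi\cdot z(t)=0$, so the binding clause of Definition 1 is met. To make $c(t)$ unattractive to every other type, note that for any $s\not=t$ and any $\pi\in\Pi(s)$ we have $\pi\cdot z(t)>0$; since $\cup_{s\not=t}\Pi(s)$ is compact this inner product is bounded below by a positive constant, so $\sup_{s\not=t}\sup_{\pi\in\Pi(s)}\frac{v(s)-v(t)}{\pi\cdot z(t)}$ is finite. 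Choosing $\alpha(t)$ strictly larger than this supremum forces $\pi\cdot c(t)>v(s)$ for every $s\not=t$ and every $\pi\in\Pi(s)$, which delivers the incentive inequalities of Definition 1. Repeating this independently for each $t\in T$ and collecting the contracts yields the menu $C=\{c(t):t\in T\}$ witnessing full extraction; finiteness of $T$ guarantees each scalar can be chosen.

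The routine core of the argument is identical to Theorem 1, so the only genuinely new work is the separation/normalization step, and that is where I expect the care to be needed. The passage from separating points (as in the singleton case) to strictly separating the two compact convex sets $\Pi(t)$ and $K(t)$ requires the compactness bookkeeping above, and the subsequent shift by a constant vector is what lets the separating functional vanish somewhere on $\Pi(t)$ while staying strictly positive on each $\Pi(s)$ with $s\not=t$. This dual property is precisely what makes the binding individual-rationality clause and the strict incentive inequalities hold at once; everything after it is the same scaling computation already carried out in the paper.
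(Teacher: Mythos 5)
Your proof is correct and takes essentially the same route as the paper's: the same separate-and-scale construction, in which $\Pi(t)$ is strictly separated from $\overline{\mbox{co}}(\cup_{s\not= t}\Pi(s))$, the separating functional is normalized so that $\pi\cdot z(t)\leq 0$ on $\Pi(t)$ with equality somewhere while remaining strictly positive on the other types' beliefs, and then $c(t)=v(t)+\alpha(t)z(t)$ with $\alpha(t)$ chosen large. The only difference is that you spell out the compactness bookkeeping (existence of the normalized separator and finiteness of the supremum defining $\alpha(t)$) that the paper asserts without proof.
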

\begin{proof}
Fix $v:T\to {\bf R}$, and fix $t\in T$. By convex independence, 
\[
\Pi(t) \cap (\overline{\mbox{co}}(\cup_{s\not= t} \Pi(s)) = \emptyset
\]
Then there exists $z(t) \in {\bf R}^S$ such that 
\[
\pi\cdot z(t) \leq 0 \ \ \forall \pi\in \Pi(t)
\]
and
\[
\pi\cdot z(t) > 0 \ \ \forall \pi\in \Pi(s), \ \ \forall s\not= t
\]
Note that $z(t)$ can be chosen so that in addition, $\pi\cdot z(t) = 0$ for some $\pi\in \Pi(t)$; we assume $z(t)$ has been so chosen. 

Now consider a contract of the form $c(t) = v(t) + \alpha(t) z(t)$. For type $t$, for any $\alpha(t) \geq 0$ this contract has expected cost no more than $v(t)$:
\[
\pi(t) \cdot c(t) = v(t) + \alpha(t) ( \pi(t)\cdot z(t) ) \leq v(t) \ \ \forall \pi\in \Pi(t)
\]
while for types $s\not= t$, the expected cost is
\[
\pi(s) \cdot c(t) = v(t) + \alpha(t) ( \pi(s) \cdot z(t) ) \ \ \forall \pi\in \Pi(s)
\]
To make this contract unattractive to types $s\not= t$,  set $\alpha(t) >0$ sufficiently large so that
\[
\alpha (t) > \sup_{\substack{{\pi \in \Pi(s)}\\ {s\not= t}}} \frac{v(s) - v(t)}{\pi \cdot z(t)}
\]
Since $\pi\cdot z(t)>0$ for all $\pi\in \Pi(s)$ and $s\not= t$, and $T$ is finite, such an $\alpha(t)>0$ exists. Then for $s\not= t$, 
\[
\pi(s) \cdot c(t) = v(t) + \alpha(t) ( \pi(s) \cdot z(t) ) > v(s) \ \ \forall \pi\in \Pi(s)
\]
by construction. For type $t$, 
\[
\pi(t)\cdot c(t) \leq v(t) \ \ \forall \pi\in \Pi(t), \ \mbox{ with } \pi\cdot c(t) = v(t) \ \mbox{ for some } \pi\in \Pi(t)
\]
Repeating this construction for each $t\in T$ yields the collection $C = \{ c(t): t\in T\}$, which achieves full extraction. 
\end{proof}

Next, we identify a condition on beliefs that is necessary for full extraction. Say beliefs $\{ \Pi(t) \subseteq \Delta(S): t\in T\}$ satisfy {\it convex dependence} if for some $t\in T$:
\[
\overline{\mbox{co}} (\cup_{s\not= t} \Pi(s)) \subseteq \Pi(t)
\]
As we show next, a necessary condition for full extraction is that beliefs do not satisfy convex dependence, that is, that for all $t\in T$, 
\[
\overline{\mbox{co}} (\cup_{s\not= t} \Pi(s)) \not\subseteq \Pi(t)
\]

\begin{theorem}
Full extraction holds only if beliefs $\{ \Pi(t)\subseteq \Delta(S): t\in T\}$ do not satisfy convex dependence. 
\end{theorem}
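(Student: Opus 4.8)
The plan is to prove the contrapositive: assuming the beliefs satisfy convex dependence, I would show that full extraction fails. So suppose there is some $t\in T$ with $\overline{\mbox{co}}(\cup_{s\neq t}\Pi(s))\subseteq\Pi(t)$. The first step is to extract from this the pointwise containment $\Pi(s)\subseteq\Pi(t)$ for every $s\neq t$: since $\Pi(s)\subseteq\cup_{u\neq t}\Pi(u)\subseteq\overline{\mbox{co}}(\cup_{u\neq t}\Pi(u))\subseteq\Pi(t)$, the inclusion follows immediately. (Here I am taking $T$ to have at least two types, so that the union is nonempty; with a single type there are no incentive constraints and the statement is vacuous.)

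Next I would use the fact that full extraction is a statement quantified over \emph{all} value functions: a suitable menu must exist for every $v:T\to{\bf R}$, so it suffices to exhibit a single $v$ for which no menu works. Fix any $s\neq t$ and choose $v$ with $v(s)>v(t)$. Suppose, toward a contradiction, that some menu $\{c(r):r\in T\}$ achieves full extraction for this $v$. The individual rationality condition for type $t$ gives $\pi\cdot c(t)\leq v(t)$ for all $\pi\in\Pi(t)$, and hence, using $\Pi(s)\subseteq\Pi(t)$, also $\pi\cdot c(t)\leq v(t)$ for all $\pi\in\Pi(s)$. On the other hand, the incentive constraint for type $s$ applied to the contract $c(t)$ requires $v(s)-\pi\cdot c(t)\leq 0$, i.e. $\pi\cdot c(t)\geq v(s)$, for all $\pi\in\Pi(s)$. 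Picking any $\pi\in\Pi(s)$ (the set is nonempty) and combining these two chains yields $v(s)\leq\pi\cdot c(t)\leq v(t)$, contradicting $v(s)>v(t)$. Thus no menu can simultaneously satisfy individual rationality for $t$ and incentive compatibility for $s$, and full extraction fails.

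This argument mirrors the impossibility argument already given for the leading example in the case of $\varepsilon$ sufficiently large, where the containment $\Pi_\varepsilon(s)\subseteq\Pi_\varepsilon(t)$ was used in exactly the same way; the present theorem is simply its general form. I do not anticipate a serious obstacle in the main chain of inequalities, which is short and uses only the weak individual rationality inequality for $t$ (the binding clause of the definition is not needed). The points that require care are mostly bookkeeping: verifying that convex dependence delivers the pointwise containment $\Pi(s)\subseteq\Pi(t)$ rather than merely a containment of the convex hull, confirming that nonemptiness of $\Pi(s)$ lets me instantiate a concrete $\pi$ in the final step, and being explicit that, because full extraction is universally quantified over $v$, producing one offending $v$ with $v(s)>v(t)$ is enough to defeat it. The only genuinely delicate modeling point is the degenerate single-type case, which should be dispatched by the standing assumption that there are at least two types.
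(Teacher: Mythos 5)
Your proof is correct and follows essentially the same route as the paper: both argue by contraposition, choose a value function with $v(s)>v(t)$ for the type whose belief set contains the others, and derive a contradiction between individual rationality for $t$ and the no-deviation constraint for $s$ using the containment $\Pi(s)\subseteq\Pi(t)$ implied by convex dependence. The only differences are cosmetic (you instantiate a single $\pi\in\Pi(s)$ and a single offending type $s$, where the paper takes $v(t)>v(t_0)$ for all $t\neq t_0$ and phrases the contradiction as a chain of inequalities holding for all $\pi$).
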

\begin{proof}
Suppose beliefs satisfy convex dependence, so for some $t_0\in T$,
\[
\overline{\mbox{co}} (\cup_{s\not= t_0} \Pi(s)) \subseteq \Pi(t_0)
\]
Let $v(t)> v(t_0)$ for all $t\not= t_0$. Suppose by way of contradiction that the menu  $\{ c(t)\in {\bf R}^S: t\in T\}$ achieves full extraction. Then 
for $t_0$, 
\[
v(t_0) - \pi\cdot c(t_0) \geq 0 \ \forall \pi\in \Pi(t_0)
\]
while for $t\not= t_0$, 
\[
v(t) - \pi\cdot c(t_0) \leq 0 \ \forall \pi\in \Pi(t)
\]
But for $t\not= t_0$, 
\[
\Pi(t) \subseteq \overline{\mbox{co}} (\cup_{s\not= t_0} \Pi(s)) \subseteq \Pi(t_0)
\]
Thus for $t\not= t_0$ 
\begin{eqnarray*}
v(t) - \pi\cdot c(t_0) &=& v(t) - v(t_0) + v(t_0) -\pi\cdot c(t_0)\\
&\geq & v(t) - v(t_0) > 0 \ \ \forall \pi\in \Pi(t)
\end{eqnarray*}
This is a contradiction. Thus full extraction is not possible. 
\end{proof}

Next we connect the notions of full extraction and weak full extraction with choice behavior of agents with incomplete preferences. We start from the idea that agents are choosing from a finite menu $C \subseteq {\bf R}^S$, as above, and that each type $t\in T$ chooses according to the Bewley preference relation $\succsim_t$ as in section 2.2. We strengthen the notion of a contract as a maximal or optimal choice by allowing the agent to randomize over the elements in $C$. As we will see in the proofs of Corollaries 1 and 2 below, or as might be apparent by a careful consideration of the definitions of full extraction and weak full extraction above, this strengthening essentially comes for free. The conditions under which we showed extraction is possible when randomization is ruled out are the same as those guaranteeing extraction even allowing for randomization.

\begin{definition}
Let $C\subseteq {\bf R}^S$ be a finite menu of contracts. A contract $c^m\in C$ is \emph{mixed-strategy maximal} for type $t\in T$ in the menu $C$ if there is no mixed strategy $\sigma \in \Delta(C)$ such that 
\[
v(t) - \sum_{c\in C}\sigma(c)(\pi\cdot c) > v(t) - \pi\cdot c^m \ \ \forall \pi\in \Pi(t)
\]
\end{definition}

\begin{definition}
Let $C\subseteq {\bf R}^S$ be a finite menu of contracts. A contract $c^o\in C$ is \emph{mixed-strategy optimal} for type $t\in T$ in the menu $C$ if for all mixed strategies $\sigma \in \Delta(C)$ 
\[
v(t) -  \pi\cdot c^o \geq v(t) -  \sum_{c\in C}\sigma(c)(\pi\cdot c)\ \ \forall \pi\in \Pi(t)
\]
\end{definition}

Note that $c^m$ is mixed-strategy maximal for type $t$ in $C$ if and only if $c^m$ is maximal for $t$ in $\Delta(C)$; similarly $c^o$ is mixed-strategy optimal for $t$ if and only if $c^o$ is optimal for $t$ in $\Delta(C)$. 

Two natural notions of extraction then follow.

\begin{definition}
\emph{Optimal full extraction} holds if, given $v:T\to {\bf R}$, there exists a menu of contracts $C=\{ c(t) \in {\bf R}^S: t\in T\}$ such that for each $t\in T$, $c(t)$ is mixed-strategy optimal for $t$ in $C$ and 
\[
v(t) - \pi(t) \cdot c(t) \geq 0 \ \ \forall \pi\in \Pi(t), \mbox{ with } v(t) = \pi(t) \cdot c(t) \mbox{ for some } \pi\in \Pi(t)
\]
\end{definition}

\begin{definition}
\emph{Maximal full extraction} holds if, given $v:T\to {\bf R}$, there exists a menu of contracts $C=\{ c(t) \in {\bf R}^S: t\in T\}$ such that for each $t\in T$, $c(t)$ is mixed-strategy maximal for $t$ in $C$ and 
\[
v(t) - \pi(t) \cdot c(t) = 0  \mbox{ for some } \pi\in \Pi(t)
\]
\end{definition}

Note that full extraction implies optimal full extraction, and weak full extraction implies maximal full extraction. Thus we obtain the following  corollaries of Theorems 1, 2 and 3.

\begin{corollary}
If beliefs $\{ \Pi(t) \subseteq \Delta(S) : t\in T\}$ satisfy weak convex independence, then maximal full extraction holds. 
\end{corollary}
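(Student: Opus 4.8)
The plan is to reduce the corollary entirely to the implication ``weak full extraction $\Rightarrow$ maximal full extraction'' asserted in the note preceding it: apply Theorem 1 to produce a weak-full-extraction menu, and then verify directly that each contract in that menu is mixed-strategy maximal, so that the very same menu witnesses maximal full extraction. No new mechanism needs to be constructed; the work is purely in checking the maximality condition against the definition.

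First I would fix $v:T\to{\bf R}$ and invoke Theorem 1, which under weak convex independence yields a menu $C=\{c(t):t\in T\}$ together with a selection $\{\pi(t)\in\Pi(t):t\in T\}$ such that for each $t\in T$,
\[
\pi(t)\cdot c(t)=v(t)\qquad\text{and}\qquad \pi(t)\cdot c(s)\geq v(t)\ \ \forall s\neq t.
\]
The second requirement of maximal full extraction, namely that $v(t)=\pi(t)\cdot c(t)$ for some $\pi(t)\in\Pi(t)$, is then immediate from the first equality above, so only mixed-strategy maximality remains to be established.

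For this, I would argue by contradiction. Unwinding the definition, $c(t)$ fails to be mixed-strategy maximal for $t$ in $C$ exactly when there is some $\sigma\in\Delta(C)$ with $\sum_{c\in C}\sigma(c)\,(\pi\cdot c)<\pi\cdot c(t)$ for every $\pi\in\Pi(t)$. Evaluating this at the witness belief $\pi=\pi(t)$ gives $\sum_{c\in C}\sigma(c)\,(\pi(t)\cdot c)<\pi(t)\cdot c(t)=v(t)$. But every element of $C$ is some $c(s)$ with $\pi(t)\cdot c(s)\geq v(t)$ (with equality at $s=t$), so the left-hand side is a convex combination of quantities each at least $v(t)$ and is therefore itself at least $v(t)$, contradicting the strict inequality. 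Hence no such $\sigma$ exists and $c(t)$ is mixed-strategy maximal; repeating for each $t$ shows $C$ achieves maximal full extraction.

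I expect essentially no obstacle: the single belief $\pi(t)$ at which $c(t)$ attains cost $v(t)$ simultaneously defeats every candidate mixed deviation, precisely because weak full extraction already forces every other contract to cost at least $v(t)$ under that same belief. The only point demanding care is bookkeeping with sign conventions --- reconciling the definition of mixed-strategy maximal, phrased in terms of $v(t)-\pi\cdot c$, with the cost inequalities $\pi(t)\cdot c(s)\geq v(t)$ delivered by Theorem 1 --- and noting that the inequality $\pi(t)\cdot c\geq v(t)$ holds for \emph{every} $c\in C$ regardless of whether distinct types share a contract.
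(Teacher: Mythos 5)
Your proof is correct and takes essentially the same approach as the paper: invoke Theorem 1 to obtain a weak-full-extraction menu, then use the single witness belief $\pi(t)$ to defeat every mixed strategy, since under $\pi(t)$ any mixture's expected cost is a convex combination of terms each at least $v(t)=\pi(t)\cdot c(t)$. The only cosmetic difference is that you run the maximality check as a contradiction, whereas the paper states the same convex-combination inequality directly.
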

\begin{proof}
Let $v:T\to {\bf R}$ be given. By Theorem 1, there exists a menu $C= \{ c(t)\in {\bf R}^S: t\in T\}$ such that for each $t\in T$,
\[
v(t) - \pi(t) \cdot c(t) = 0 \mbox{ and } v(t) - \pi(t) \cdot c(s) \leq 0  \ \ \forall s\not= t
\]
for some $\pi\in \Pi(t)$. 
Thus 
\[
v(t) - \pi(t) \cdot c(s) \leq 0 \leq v(t) -\pi(t) \cdot c(t) \ \ \forall s\not= t
\]
Now for any mixed strategy $\sigma\in \Delta(C)$, 
\[
v(t) - \sum_{c\in C} \sigma(c) (\pi(t) \cdot c) = \sum_{c\in C} \sigma(c) ( v(t) - \pi(t)\cdot c)  \leq v(t) - \pi(t) \cdot c(t)
\]
Thus $c(t)$ is mixed-strategy maximal for $t$ in $C$. 
\end{proof}

\begin{corollary}
If beliefs $\{ \Pi(t) \subseteq \Delta(S) : t\in T\}$ satisfy convex independence, then optimal full extraction holds. 
\end{corollary}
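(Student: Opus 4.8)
The plan is to mirror exactly the argument used for Corollary~1, now leveraging the stronger conclusion of Theorem~2. Let $v:T\to{\bf R}$ be given. Since beliefs satisfy convex independence, Theorem~2 produces a menu $C=\{c(t)\in{\bf R}^S:t\in T\}$ achieving full extraction; in particular, for each $t\in T$ we have
\[
v(t)-\pi\cdot c(t)\geq 0\ \ \forall\pi\in\Pi(t),\quad\mbox{with equality for some }\pi\in\Pi(t),
\]
and
\[
v(t)-\pi\cdot c(s)\leq 0\ \ \forall\pi\in\Pi(t),\ \forall s\neq t.
\]
The individual-rationality requirement of optimal full extraction is thus immediate, so the only remaining task is to verify that each $c(t)$ is mixed-strategy optimal for type $t$ in $C$.

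To establish optimality, fix $t\in T$ and $\pi\in\Pi(t)$. The full extraction inequalities say precisely that $\pi\cdot c(t)\leq v(t)\leq\pi\cdot c(s)$ for every $s\neq t$, so $c(t)$ minimizes $\pi\cdot c$ over all the pure contracts $c\in C$, \emph{for this particular} $\pi$. The key step is that this pointwise (in $\pi$) minimality extends to all mixtures: for any $\sigma\in\Delta(C)$, convexity of $\pi\cdot(\cdot)$ gives
\[
\pi\cdot c(t)\leq\sum_{c\in C}\sigma(c)\,(\pi\cdot c),
\]
since each term $\pi\cdot c$ in the average is at least $\pi\cdot c(t)$. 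Equivalently,
\[
v(t)-\pi\cdot c(t)\geq v(t)-\sum_{c\in C}\sigma(c)\,(\pi\cdot c).
\]
Because $\pi\in\Pi(t)$ was arbitrary, this holds for all $\pi\in\Pi(t)$, which is exactly the definition of $c(t)$ being mixed-strategy optimal for $t$ in $C$. Repeating over all $t\in T$ yields optimal full extraction.

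I do not anticipate a genuine obstacle here: the content is entirely carried by Theorem~2, and the passage to mixed strategies is the trivial observation that an average of numbers each bounded below by $\pi\cdot c(t)$ is itself bounded below by $\pi\cdot c(t)$. The one point worth stating cleanly is that the inequality $\pi\cdot c(t)\leq\pi\cdot c$ must hold \emph{simultaneously for all} $\pi\in\Pi(t)$ — which is guaranteed by the full extraction conclusion of Theorem~2, in contrast to the weak full extraction used in Corollary~1, where the inequalities hold only at a single selected belief and hence deliver only maximality rather than optimality.
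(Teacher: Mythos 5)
Your proposal is correct and follows essentially the same argument as the paper's proof: invoke Theorem~2 to obtain the full extraction menu, observe that $\pi\cdot c(t)\leq v(t)\leq \pi\cdot c(s)$ for all $\pi\in\Pi(t)$ and $s\neq t$, and then average this pointwise inequality over any mixture $\sigma\in\Delta(C)$ to conclude mixed-strategy optimality. Your closing remark contrasting this with Corollary~1 (where the inequalities hold only at a single selected belief, yielding only maximality) is exactly the right distinction.
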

\begin{proof}
Let $v:T\to {\bf R}$ be given. By Theorem 2, there exists a menu $C= \{ c(t)\in {\bf R}^S: t\in T\}$ such that for each $t\in T$,
\[
v(t) - \pi(t) \cdot c(t) \geq 0 \ \ \forall \pi\in \Pi(t), \mbox{ with } v(t) = \pi(t) \cdot c(t) \mbox{ for some } \pi\in \Pi(t)
\]
and 
\[
v(t) - \pi(t) \cdot c(s) \leq 0 \ \ \forall \pi\in \Pi(t), \ \ \forall s\not= t
\]
Thus 
\[
v(t) - \pi(t) \cdot c(s) \leq 0 \leq v(t) -\pi(t) \cdot c(t) \ \ \forall \pi\in \Pi(t) \ \ \forall s\not= t
\]
Now for any mixed strategy $\sigma\in \Delta(C)$, 
\[
v(t) - \sum_{c\in C} \sigma(c) (\pi(t) \cdot c) = \sum_{c\in C} \sigma(c) ( v(t) - \pi(t)\cdot c) \leq v(t) - \pi(t) \cdot c(t)  \ \forall \pi\in \Pi(t)
\]
Thus $c(t)$ is mixed-strategy optimal for $t$ in $C$. 
\end{proof}

\begin{corollary}
Optimal full extraction holds only if beliefs $\{ \Pi(t) \subseteq \Delta(S) : t\in T\}$ do not satisfy convex dependence. 
\end{corollary}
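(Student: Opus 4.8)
The plan is to argue by contradiction, in the same spirit as the proof of Theorem 3, but using the optimality (cost-minimization) condition in place of the strong incentive constraints that were available there. Suppose beliefs satisfy convex dependence, so that there is some $t_0\in T$ with $\overline{\mbox{co}}(\cup_{s\neq t_0}\Pi(s)) \subseteq \Pi(t_0)$; in particular $\Pi(t)\subseteq \Pi(t_0)$ for every $t\neq t_0$. Exactly as in Theorem 3, I would choose the value function so that $v(t) > v(t_0)$ for all $t\neq t_0$, and suppose toward a contradiction that some menu $C = \{c(t): t\in T\}$ delivers optimal full extraction.

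Next I would extract the two ingredients the definition supplies. First, the individual rationality requirement at $t_0$ gives $\pi\cdot c(t_0) \leq v(t_0)$ for every $\pi\in \Pi(t_0)$. Second, fix any $t\neq t_0$: the binding part of individual rationality at $t$ furnishes a belief $\pi^* \in \Pi(t)$ with $\pi^*\cdot c(t) = v(t)$, and mixed-strategy optimality of $c(t)$ for type $t$ in $C$, evaluated at the point mass on the pure contract $c(t_0)\in C$, yields $\pi\cdot c(t) \leq \pi\cdot c(t_0)$ for all $\pi\in \Pi(t)$, in particular at $\pi=\pi^*$.

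The contradiction then follows by chaining these inequalities through the inclusion $\Pi(t)\subseteq \Pi(t_0)$. Since $\pi^*\in \Pi(t)\subseteq \Pi(t_0)$, individual rationality at $t_0$ gives $\pi^*\cdot c(t_0)\leq v(t_0)$, so combining with optimality,
\[
v(t) = \pi^*\cdot c(t) \leq \pi^*\cdot c(t_0) \leq v(t_0),
\]
contradicting $v(t) > v(t_0)$. Hence optimal full extraction is impossible whenever beliefs satisfy convex dependence, which is the contrapositive of the claim.

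The one place this differs from Theorem 3, and the only point requiring care, is that optimal full extraction does not hand us the inequality $v(t) - \pi\cdot c(t_0)\leq 0$ directly; it must instead be reconstructed from the cost-minimization condition together with the binding individual rationality constraint at type $t$. Recognizing that optimality against the single pure contract $c(t_0)$ already suffices, and that the belief $\pi^*$ witnessing the binding constraint at $t$ lies in $\Pi(t_0)$ precisely because of convex dependence, is the crux of the argument; no separation or genericity input is needed beyond this.
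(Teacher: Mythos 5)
Your proof is correct and follows essentially the same route as the paper's: both fix $t_0$ with $\overline{\mbox{co}}(\cup_{s\neq t_0}\Pi(s))\subseteq \Pi(t_0)$, choose $v(t)>v(t_0)$ for $t\neq t_0$, and combine individual rationality at $t_0$, the binding constraint at $t$, and mixed-strategy optimality of $c(t)$ evaluated at the point mass on $c(t_0)$, using the inclusion $\Pi(t)\subseteq\Pi(t_0)$ to reach a contradiction. Your chaining $v(t)=\pi^*\cdot c(t)\leq \pi^*\cdot c(t_0)\leq v(t_0)$ is just a slightly more compact arrangement of the paper's identical ingredients.
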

\begin{proof}
The proof follows the proof of Theorem 3. Suppose beliefs satisfy convex dependence, so for some $t_0\in T$,
\[
\overline{\mbox{co}} (\cup_{s\not= t_0} \Pi(s)) \subseteq \Pi(t_0)
\]
Let $v(t)> v(t_0)$ for all $t\not= t_0$. Suppose by way of contradiction that the menu  $\{ c(t)\in {\bf R}^S: t\in T\}$ achieves optimal full extraction. Then 
for $t_0$, 
\[
v(t_0) - \pi\cdot c(t_0) \geq 0 \ \forall \pi\in \Pi(t_0)
\]
while for $t\not= t_0$, 
\[
v(t) - \pi\cdot c(t_0) \leq v(t) - \pi\cdot c(t)  \ \forall \pi\in \Pi(t)
\]
and 
\[
v(t) - \pi\cdot c(t) = 0 \ \mbox{ for some } \pi\in \Pi(t)
\]
Putting these together, for $t\not= t_0$ there must be some $\pi\in \Pi(t)$ such that $v(t) - \pi\cdot c(t_0) \leq 0$. 

But for $t\not= t_0$, 
\[
\Pi(t) \subseteq \overline{\mbox{co}} (\cup_{s\not= t_0} \Pi(s)) \subseteq \Pi(t_0)
\]
Thus for $t\not= t_0$, 
\begin{eqnarray*}
v(t) - \pi\cdot c(t_0) &=& v(t) - v(t_0) + v(t_0) -\pi\cdot c(t_0)\\
&\geq & v(t) - v(t_0) > 0 \ \ \forall \pi\in \Pi(t)
\end{eqnarray*}
This is a contradiction. Thus optimal full extraction is not possible. 
\end{proof}

We close this section with some results on the limits to full extraction when convex independence fails. We identify a natural condition under which incentive compatibility limits the variation in contracts that the designer can offer, in some cases limiting the designer to a single contract. 

We start with a general definition of incentive compatibility in this setting. The menu $C= \{c(t) \in {\bf R}^S: t\in T\}$ is \emph{incentive compatible} if for each $t\in T$: 
\[
v(t) - \pi(t) \cdot c(t) \geq v(t) - \pi(t) \cdot c(s) \ \  \forall \pi\in \Pi(t), \ \forall s\not= t
\]
If $T^* \subseteq T$, say a menu $C^* = \{ c(t)\in {\bf R}^S: t\in T^*\}$ is \emph{incentive compatible for }$T^*$ if for all $t\in T^*$:
\[
v(t) - \pi(t) \cdot c(t) \geq v(t) - \pi(t) \cdot c(s) \ \  \forall \pi\in \Pi(t), \ \forall s\in T^* 
\]
Similarly, say $C^*$ is \emph{individually rational} if for all $t\in T^*$,
\[
v(t) - \pi(t) \cdot c(t) \geq 0 \ \ \forall \pi\in \Pi(t)
\] 
Note that these notions of incentive compatibility and individual rationality are consistent with the robust version implicit in the definition of full extraction, and, as in optimal full extraction, with providing agents stronger incentives in the form of optimal choices rather than maximal choices. 

Next we consider a natural condition on the richness of beliefs shared by different types. 

\begin{definition}
Beliefs $\{ \Pi(t) \subseteq \Delta(S): t\in T\}$ are \emph{fully overlapping} if for each $t, t'\in T$, $\Pi(t) \cap \Pi(t')$ has full dimension.
\end{definition}

When beliefs are fully overlapping, incentive compatibility imposes significant restrictions on possible menus of contracts, since it requires that two contracts $c(t)$ and $c(t')$ must have the same expected value according to any belief $\pi$ that is shared by types $t$ and $t'$. When the set of such shared beliefs is sufficiently rich, this observation forces $c(t)$ and $c(t')$ to be the same. This observation in turn yields the following result.

\begin{theorem}
Suppose $C = \{ c(t)\in {\bf R}^S : t\in T\}$ is incentive compatible. 
\begin{enumerate}[(i)]
  \item If $\Pi(t) \cap \Pi(t')$ has full dimension, then $c(t) = c(t')$. 
  \item If beliefs $\{ \Pi(t) \subseteq \Delta(S): t\in T\}$ are fully overlapping then $c(t) = c(t')$ for all $t, t'\in T$. 
  \item If $\Pi(t) \cap \Pi(t')$ has full dimension for each $t, t'\in T^* \subseteq T$ and $C^* = \{ c(t) \in {\bf R}^S: t\in T^*\}$ is incentive compatible for $T^*$, then $c(t) = c(t')$ for all $t, t'\in T^*$. 
\end{enumerate}
\end{theorem}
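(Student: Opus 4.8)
The plan is to prove part (i) directly and then read off (ii) and (iii) as essentially free consequences. First I would rewrite the incentive compatibility constraints in inner-product form. Since the $v(t)$ terms cancel, incentive compatibility at $t$ taking $s=t'$ says precisely that $\pi \cdot c(t) \leq \pi \cdot c(t')$, i.e. $\pi \cdot (c(t) - c(t')) \leq 0$, for every $\pi \in \Pi(t)$; symmetrically, incentive compatibility at $t'$ taking $s = t$ gives $\pi \cdot (c(t') - c(t)) \leq 0$ for every $\pi \in \Pi(t')$. This is exactly the reduction already carried out for the $\varepsilon$-contamination case in the text preceding Proposition 2, so the general argument should mirror it.

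The crux is to restrict both families of inequalities to the shared beliefs. For any $\pi \in \Pi(t) \cap \Pi(t')$ the first inequality yields $\pi \cdot (c(t) - c(t')) \leq 0$ and the second yields $\pi \cdot (c(t) - c(t')) \geq 0$, so together they force $\pi \cdot (c(t) - c(t')) = 0$ for all $\pi \in \Pi(t) \cap \Pi(t')$. Setting $x = c(t) - c(t')$, this is exactly the statement that $x$ is orthogonal to every element of the intersection. Invoking the hypothesis that $\Pi(t) \cap \Pi(t')$ has full dimension — by the definition in Section 2, the only vector orthogonal to every point of a full-dimensional set is $0$ — I conclude $x = 0$, that is, $c(t) = c(t')$. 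This establishes (i).

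Part (ii) then follows by applying (i) to each pair $t, t' \in T$: full overlap is precisely the assumption that $\Pi(t) \cap \Pi(t')$ has full dimension for every such pair, so $c(t) = c(t')$ for all pairs. Part (iii) follows by repeating the argument of (i) verbatim, after observing that the derivation uses only the two incentive constraints linking $t$ and $t'$; when $t, t' \in T^*$ these are exactly the constraints supplied by incentive compatibility for $T^*$, so the same conclusion $c(t) = c(t')$ holds for every pair in $T^*$.

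There is no genuine obstacle in this proof. The single step that does the real work is combining the two oppositely-signed incentive constraints on the intersection to upgrade a pair of inequalities into the equality $\pi \cdot x = 0$ on a full-dimensional set; once that equality is in hand, the definition of full dimension closes the argument immediately, and the remaining parts are just quantifier bookkeeping over pairs of types.
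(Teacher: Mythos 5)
Your proof is correct and follows essentially the same route as the paper's: cancel the $v(t)$ terms in the two incentive constraints linking $t$ and $t'$, combine the resulting opposite inequalities on $\Pi(t)\cap\Pi(t')$ to get $\pi\cdot(c(t)-c(t'))=0$ on the intersection, and invoke the paper's definition of full dimension to conclude $c(t)=c(t')$, with (ii) and (iii) following as immediate pairwise applications. No gaps to report.
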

\begin{proof}
For (i), fix $t, t'\in T$ with $t\not= t'$. Since $C$ is incentive compatible, 
\[
v(t) - \pi \cdot c(t') \leq v(t) - \pi\cdot c(t) \ \ \forall \pi\in \Pi(t)
\]
and 
\[
v(t') - \pi \cdot c(t) \leq v(t') - \pi\cdot c(t') \ \ \forall \pi\in \Pi(t')
\]
Thus
\[
\pi\cdot c(t') \geq \pi\cdot c(t) \ \ \forall \pi\in \Pi(t)
\]
and
\[
\pi\cdot c(t) \geq \pi\cdot c(t') \ \ \forall \pi\in \Pi(t')
\]
Putting these together, 
\[
\pi\cdot (c(t) - c(t')) = 0 \ \ \forall \pi\in \Pi(t)\cap \Pi(t')
\]
Since $\Pi(t)\cap \Pi(t')$ has full dimension, this implies $c(t) = c(t')$.

Then (ii) follows from (i), since if beliefs are fully overlapping then $\Pi(t) \cap \Pi(t')$ has full dimension for any $t, t'\in T$, which implies that $c(t) = c(t')$ for all $t, t'\in T$. Similarly, (iii) follows from the argument used to prove (i). 
\end{proof}

Note that (ii) above also holds under a weaker condition on beliefs, provided only that there is some indexing $T = \{ t_1, \ldots ,t_T\}$ such that $\Pi(t_i) \cap \Pi(t_{i+1})$ has full dimension for each $i=1,\ldots , T-1$; similarly, (iii) holds under an analogous weakening applied to the subset $T^* \subseteq T$. 

Thus even if the designer chooses to offer a smaller menu of contracts, for example by forgoing participation for some types with lower values to focus on extracting more surplus from types with higher values, these results imply that the designer must always offer the same contract to any pair of types sharing a set of beliefs of full dimension, and when beliefs are fully  overlapping the designer can offer at most one contract. With some additional information about the designer's beliefs $\pi(d) \in \Delta(S)$, these results lead to sharp predictions regarding the designer's optimal menu of contracts. Although these results require additional restrictions that we do not focus on otherwise, they might have some independent interest. Thus we record these results next. 

\begin{theorem}
Suppose beliefs $\{ \Pi(t) \subseteq \Delta(S): t\in T\}$ are fully overlapping. Let  $T^* \subseteq T$, and let $t_1\in T^*$ satisfy $v(t_1) = \min_{t\in T^*} v(t)$. If $\pi(d) \in \Pi(t_1)$, then the deterministic contract $v(t_1)$ maximizes the designer's expected revenue among all  menus $C^* = \{ c(t) \in {\bf R}^S: t\in T^*\}$ that are incentive compatible and individually rational for $T^*$. 
\end{theorem}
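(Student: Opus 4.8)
The plan is to combine the single-contract conclusion of Theorem 4 with individual rationality at the lowest-value type $t_1$ and the concurrence hypothesis $\pi(d)\in\Pi(t_1)$. The designer's expected revenue from a menu $C^* = \{c(t): t\in T^*\}$ is $\sum_{t\in T^*}\pi(d)\cdot c(t)$, so it suffices to show that every admissible menu (one that is incentive compatible and individually rational for $T^*$) yields revenue at most $|T^*|\,v(t_1)$, and that the constant menu $c(t)=v(t_1)$ is itself admissible and attains this value.

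First I would verify feasibility of the proposed optimum. Setting $c(t)=v(t_1)$ for every $t\in T^*$ produces a menu whose contracts all coincide, so incentive compatibility for $T^*$ holds trivially, and individual rationality for $T^*$ holds because $v(t)-\pi\cdot v(t_1)=v(t)-v(t_1)\geq 0$ for every $t\in T^*$ and every $\pi\in\Pi(t)$, using $v(t_1)=\min_{t\in T^*}v(t)$ together with the convention that the constant $v(t_1)$ denotes the constant vector. Its revenue is $\sum_{t\in T^*}\pi(d)\cdot v(t_1)=|T^*|\,v(t_1)$.

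For the upper bound I would appeal to Theorem 4(iii): since beliefs are fully overlapping, $\Pi(t)\cap\Pi(t')$ has full dimension for every $t,t'\in T^*$, so any $C^*$ that is incentive compatible for $T^*$ must satisfy $c(t)=c(t')$ for all $t,t'\in T^*$. Writing $c$ for this common contract, individual rationality at $t_1$ gives $\pi\cdot c\leq v(t_1)$ for all $\pi\in\Pi(t_1)$, and since $\pi(d)\in\Pi(t_1)$ we obtain $\pi(d)\cdot c\leq v(t_1)$. Summing over $T^*$, any admissible menu yields revenue $\sum_{t\in T^*}\pi(d)\cdot c=|T^*|\,(\pi(d)\cdot c)\leq |T^*|\,v(t_1)$, which matches the value attained by the deterministic contract $v(t_1)$ and establishes its optimality.

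The argument is brief once Theorem 4 is available; the only real content is that full overlap collapses the entire feasible set to single-contract menus, after which concurrence converts the individual-rationality constraint of the cheapest type into a pointwise bound on the designer's valuation of the single contract on offer. The point to be careful about is that the upper bound exploits only individual rationality at $t_1$ and only the single belief $\pi(d)\in\Pi(t_1)$, so no additional hypotheses on the designer's beliefs or on the other types are required; the normalization of revenue as a sum rather than an average is immaterial, since the same constant factor appears on both sides.
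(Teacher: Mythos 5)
Your proposal is correct and follows essentially the same route as the paper's proof: invoke Theorem 4(iii) to collapse any incentive compatible menu for $T^*$ to a single contract, use individual rationality at $t_1$ together with $\pi(d)\in\Pi(t_1)$ to bound the designer's revenue from that contract by $v(t_1)$ per type, and check that the constant menu $c(t)=v(t_1)$ is itself incentive compatible and individually rational. The only difference is that you spell out the feasibility check and the summation over $T^*$ more explicitly than the paper does, which is a matter of exposition rather than substance.
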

\begin{proof}
Let $C^* = \{ c(t) \in {\bf R}^S: t\in T^*\}$ be incentive compatible and individually rational for $T^* \subseteq T$. By part (iii) of Theorem 4, $c(t) = c(t')= c(t_1)$ for all $t, t'\in T^*$. Since $C^*$ is individually rational, 
\[
v(t_1) - \pi \cdot c(t_1) \geq 0 \ \ \ \forall \pi\in \Pi(t_1)
\]
Thus 
\[
\pi\cdot c(t_1) \leq v(t_1) \ \ \ \forall \pi\in \Pi(t_1)
\]
If $\pi(d) \in \Pi(t_1)$, this implies $\pi(d) \cdot c(t_1) \leq v(t_1)$. Now note that the menu in which $c(t) = v(t_1)$ for each $t\in T^*$ is incentive compatible and individually rational.   Thus the deterministic contract $v(t_1)$ maximizes the designer's expected revenue among all such menus. 
\end{proof}

Note that similar conclusions follow if the designer has a set of beliefs $\Pi(d) \subseteq \Delta(S)$. In that case, the deterministic contract $v(t_1)$ is optimal for the designer among all menus that are incentive compatible and individually rational for $T^*$, provided $\Pi(d)\subseteq \Pi(t_1)$. If instead the designer uses a maxmin criterion, then the deterministic contract $v(t_1)$ maximizes the designer's minimum expected revenue among all menus that are incentive compatible and individually rational for $T^*$ as long as $\Pi(d) \cap \Pi(t_1) \not= \emptyset$. 

\section{Genericity of Full Extraction}

In this section we investigate the robustness of the necessary and sufficient conditions for full extraction we identified in the previous section. The work of Cr{\' e}mer and McLean (1988) and others uncovered the connection between correlated beliefs and full extraction in standard Bayesian mechanism design. Perhaps the most powerful and negative aspect of this
work was showing that these conditions are generic in an appropriate sense. Related work
showed that these generic conditions lead to full extraction in a wide array of settings with
private information. 

We seek a similar measure of the extent of full rent extraction and
the existence of information rents for robust notions of incentive compatibility.
To formalize this discussion, suppose $\vert S\vert \geq \vert T \vert$. Recall that in the standard Bayesian
setting, each type $t\in T$ is associated with a unique conditional distribution $\pi(t)\in \Delta(S)$,
and when $T$ and $S$ are finite, full  extraction is possible if and only if the collection $\{ \pi(t) : t\in T\}$ satisfies convex independence. In this case, it is straightforward to see that the subset of $\Delta(S)^T$ on which this condition is satisfied is an 
open set of full Lebesgue measure. 

In our setting, each type $t$ is associated with a set $\Pi(t)$ drawn from 
\[
{\cal B} = \{ \Pi\subseteq \Delta(S): \Pi \mbox{ is closed and convex}\}
\]
and full extraction is characterized in terms of conditions on the collection $\{ \Pi(t) : t\in T\} \subseteq {\cal B}^T$. To gauge how widespread the absence of information rents is in this setting, we seek to measure the size of the subset of ${\cal B}^T$ on which the various conditions characterizing full extraction and weak full extraction hold.

To make this precise, we endow ${\cal B}$  with the Hausdorff topology, and ${\cal B}^T$ with the product
topology. Let ${\cal W}$ denote the subset of ${\cal B}^T$ satisfying weak convex independence, ${\cal I}$ denote the subset of ${\cal B}^T$
satisfying convex independence, and ${\cal D}$ denote the subset of ${\cal B}^T$ satisfying convex dependence. Thus ${\cal W}$ is a set on
which weak full extraction is always possible. Similarly, ${\cal I}$ is a set of beliefs for which
 full extraction is always possible, and ${\cal D}$ is a set for which  full extraction
is never possible.

The set ${\cal B}^T$ is infinite-dimensional, even when $S$ and $T$ are finite. Thus the issue of
measuring the sizes of these sets is not straightforward due to the absence of a natural
analogue of Lebesgue measure in infinite-dimensional spaces. Genericity in these cases
is typically defined either using topological notions, such as open and dense or residual,
or using measure-theoretic notions such as prevalence. Prevalence and its complement, shyness, developed by Christensen (1974) and Hunt et al. (1992), and made relative by Anderson and Zame (2001), are analogues of Lebesgue measure 
and full Lebesgue measure that more closely mimic properties of Lebesgue measure in many
problems.\footnote{Well-known problems with interpreting topological notions of genericity are illustrated by simple
examples of open and dense sets in ${\bf R}^n$ having arbitrarily small Lebesgue measure, and residual sets of
Lebesgue measure 0.} We first give formal definitions, and then discuss some important properties
shared by these notions of genericity.

Because we are interested in the relative size of subsets of ${\cal B}^T$, we use the relative notions
of prevalence and shyness developed by Anderson and Zame (2001) for use in a convex subset which may be a shy
subset of the ambient space. The formal definitions are given below.

\begin{definition}
Let $Z$ be a topological vector space and let $C\subseteq Z$ be a convex Borel subset
of Z which is completely metrizable in the relative topology. Let $c\in C$. A universally
measurable subset $E\subseteq Z$ is \emph{shy in} $C$ \emph{at} $c$ if for each $\delta >0$ and each neighborhood $W$ of 0
in $Z$, there is a regular Borel probability measure $\mu$ on $Z$ with compact support such that
$\mbox{supp } \mu \subseteq (\delta(C-c) + c)\cap (W+c)$ and $\mu(E+z) =0$ for every $z\in Z$.\footnote{A set $E\subseteq Y$ is universally measurable if for every Borel measure $\eta$ on $Y$, $E$ belongs to the completion
with respect to $\eta$ of the sigma algebra of Borel sets.} The set $E$ is \emph{shy
in} $C$ if it is shy at each point $c\in C$. A (not necessarily universally measurable) subset $F\subseteq C$ is \emph{shy in} $C$ if it is contained in a shy universally measurable set. A subset $K\subseteq C$ is \emph{prevalent in} $C$ if its complement $C\setminus K$ is shy in $C$.
\end{definition}

Like Lebesgue measure 0, relative shyness and prevalence have many properties desirable for measure-theoretic notions of ``smallness'' and ``largeness'': relative shyness is translation invariant, preserved under countable unions, and coincides with Lebesgue measure 0 in ${\bf R}^n$, and no relatively open set is relatively shy.

We note a simple but important property common to both residual and relative prevalence as notions of genericity with respect to subsets of ${\cal B}^T$.

\begin{lemma}
Let $X\subseteq {\cal B}^T$ be universally measurable. If $X^c = {\cal B}^T \setminus X$ has a non-empty
relative interior, then $X$ is neither residual nor relatively prevalent in ${\cal B}^T$.
\end{lemma}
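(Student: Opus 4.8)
The plan is to exploit the hypothesis directly: it gives a non-empty set $U\subseteq X^c$ that is open in the relative topology of ${\cal B}^T$, and I would play this single set against each of the two notions of genericity separately. Fix such a $U$, so $U\neq\emptyset$, $U$ is relatively open, and $U\subseteq X^c$. Note also that $X^c={\cal B}^T\setminus X$ is universally measurable since $X$ is, so the definitions of shyness and prevalence apply to it without issue.

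For the claim that $X$ is not residual, I would first record that ${\cal B}^T$ is a Baire space. This holds because $\Delta(S)$ is a compact metric space, so the family of closed subsets of $\Delta(S)$ under the Hausdorff metric is compact, hence complete; moreover ${\cal B}$ is closed in this family, since a Hausdorff limit of closed convex sets is again closed and convex. Thus ${\cal B}$, and therefore the finite product ${\cal B}^T$, is completely metrizable. By the Baire category theorem, a non-empty open subset of a complete metric space is non-meager, so $U$ is non-meager in ${\cal B}^T$. Since $U\subseteq X^c$ and every subset of a meager set is meager, $X^c$ is non-meager. As $X$ is residual precisely when $X^c$ is meager, it follows that $X$ is not residual.

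For the claim that $X$ is not relatively prevalent, I would invoke the structural property of relative shyness recorded immediately after its definition, namely that no relatively open set is relatively shy. Hence $U$ is not shy in ${\cal B}^T$. Since any subset of a relatively shy set is relatively shy, if $X^c$ were shy then the subset $U\subseteq X^c$ would be shy as well, contradicting the previous sentence. Therefore $X^c$ is not shy, and since $X$ is prevalent exactly when $X^c$ is shy, $X$ is not prevalent.

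The argument is short once the two background facts are in place, and these are the only points that require care: verifying that ${\cal B}^T$ is completely metrizable so that the Baire category theorem applies, and correctly combining the cited fact that no relatively open set is shy with the monotonicity of shyness under passage to subsets. Both monotonicity and the ``no open set is shy'' property are exactly the features of relative shyness in the sense of Anderson and Zame (2001) noted above, so no additional machinery beyond the Baire category theorem is needed.
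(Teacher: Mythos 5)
Your proof is correct and follows essentially the same route as the paper's: the non-residuality claim rests on ${\cal B}^T$ being a compact (hence completely metrizable, hence Baire) space together with the Baire Category Theorem, and the non-prevalence claim rests on the fact that no non-empty relatively open set is relatively shy, combined with monotonicity of shyness under subsets. The only difference is that you spell out details the paper leaves implicit (compactness of the hyperspace ${\cal B}$ and the subset-monotonicity of shyness), which is fine.
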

\begin{proof}
For relative prevalence the result is immediate from the definitions and the fact that
no relatively open set is relatively shy. To see that $X$ is not residual in ${\cal B}^T$, note that ${\cal B}^T$ 
is a compact metric space, hence a Baire space. The conclusion then follows immediately from the Baire Category Theorem.
\end{proof}

From this simple observation, we conclude that full extraction is neither generically possible nor generically impossible. Similarly, optimal full extraction is neither generically possible nor generically impossible. 

\begin{theorem}
Let $\vert S\vert \geq \vert T \vert$. Neither ${\cal I}$ nor ${\cal D}$ is residual in ${\cal B}^T$. Neither ${\cal I}$ nor ${\cal D}$ is
relatively prevalent in ${\cal B}^T$.
\end{theorem}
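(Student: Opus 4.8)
The plan is to apply Lemma 1 twice, once with $X = {\cal I}$ and once with $X = {\cal D}$, in each case exhibiting a nonempty relatively open subset of the complement $X^c$. Two structural facts organize the argument. First, convex independence and convex dependence are mutually exclusive: if $\Pi(t)\cap\overline{\mbox{co}}(\cup_{s\neq t}\Pi(s)) =\emptyset$ for all $t$, then no nonempty set $\overline{\mbox{co}}(\cup_{s\neq t_0}\Pi(s))$ can be contained in $\Pi(t_0)$, so ${\cal I}\subseteq {\cal D}^c$. Second, Lemma 1 requires $X$ to be universally measurable, which I would obtain by showing ${\cal I}$ is open and ${\cal D}$ is closed in ${\cal B}^T$ (both are then Borel). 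Using the continuity of the maps $(\Pi(s))_s\mapsto \overline{\mbox{co}}(\cup_{s\neq t}\Pi(s))$ and $(\Pi(s))_s \mapsto \Pi(t)$ in the Hausdorff metric on compact subsets of the compact set $\Delta(S)$, this reduces to two elementary facts: disjointness of compact sets is an open condition (positive distance is preserved under small Hausdorff perturbations), giving that ${\cal I}$ is open; and containment of compact sets is preserved under Hausdorff limits, so each of the finitely many conditions $\overline{\mbox{co}}(\cup_{s\neq t_0}\Pi(s)) \subseteq \Pi(t_0)$ is closed, giving that ${\cal D}$ is closed.

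With measurability in hand I would construct the two open sets. Because ${\cal D}$ is closed, ${\cal D}^c$ is open, so it suffices to exhibit one configuration outside ${\cal D}$; the hypothesis $|S|\geq |T|$ lets me choose $|T|$ affinely independent points $\pi(t)\in\mbox{rint}\,\Delta(S)$ (as $\dim\Delta(S)=|S|-1\geq |T|-1$), and the corresponding singletons satisfy convex independence, hence lie in ${\cal I}\subseteq {\cal D}^c$. Thus ${\cal D}^c$ is a nonempty open set and has nonempty relative interior, so Lemma 1 applied to $X={\cal D}$ shows ${\cal D}$ is neither residual nor relatively prevalent. For ${\cal I}$ I instead need an open subset of the closed set ${\cal I}^c$, i.e.\ a family that robustly violates convex independence. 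I would take a full-dimensional closed ball $A\subseteq\mbox{rint}\,\Delta(S)$ and set $\Pi(t_1)=\Pi(t_2)=A$ with the remaining $\Pi(s)$ arbitrary, so that $\Pi(t_1)\cap\overline{\mbox{co}}(\cup_{s\neq t_1}\Pi(s))\supseteq A\neq\emptyset$ and convex independence fails for $t_1$. Once I verify that a whole neighborhood of this configuration lies in ${\cal I}^c$, the set ${\cal I}^c$ has nonempty interior, and Lemma 1 applied to $X={\cal I}$ gives that ${\cal I}$ is neither residual nor relatively prevalent.

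The step requiring the most care is this last verification, and the key geometric fact I would prove is that a point in the relative interior of a convex set remains inside every convex set sufficiently Hausdorff-close to it. Applied to the interior of $A$, this shows that for all perturbations $\Pi'(t_1),\Pi'(t_2)$ within a small Hausdorff distance of $A$ the intersection $\Pi'(t_1)\cap\Pi'(t_2)$ is still nonempty, so convex independence continues to fail for $t_1$ no matter how the other sets move; a small enough neighborhood also keeps all sets inside $\Delta(S)$, since $A\subseteq\mbox{rint}\,\Delta(S)$. This interior-robustness fact (proved by perturbing the vertices of a small cross-polytope around an interior point and invoking convexity) is the only genuinely geometric ingredient; the remainder is routine bookkeeping with the Hausdorff metric together with the two invocations of Lemma 1.
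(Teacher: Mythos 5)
Your proposal is correct, and at the top level it is forced into the same shape as the paper's proof: two invocations of Lemma 1, each requiring universal measurability of the set and a nonempty relative interior for its complement. But the two supporting constructions are genuinely different. For measurability, the paper merely asserts that ${\cal I}$ and ${\cal D}$ are Borel, while you prove the sharper topological facts that ${\cal I}$ is open and ${\cal D}$ is closed in the Hausdorff topology; this extra work pays off in the ${\cal D}$-half, where closedness of ${\cal D}$ lets you finish by exhibiting a single point of ${\cal D}^c$ (your affinely independent singletons), whereas the paper explicitly builds a relative interior point of ${\cal I}$ via an $\varepsilon$-separation argument (which is, in substance, the same argument as your proof that ${\cal I}$ is open, so this difference is largely cosmetic). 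The real divergence is in the ${\cal I}$-half. The paper's open subset of ${\cal I}^c$ lies inside ${\cal D}$: it constructs a robustly convex-dependent configuration by nesting $\overline{\mbox{co}}(\cup_{t\not= t_0}\Pi(t))$ inside a small ball placed deep inside $\Pi(t_0)$, and checks stability of the nesting under Hausdorff perturbations at two radii. Your open subset of ${\cal I}^c$ is not inside ${\cal D}$: you have two types share a full-dimensional ball $A$, leave all other beliefs arbitrary, and argue that an interior point of $A$ survives in every convex set Hausdorff-close to $A$, so $\Pi'(t_1)\cap\Pi'(t_2)\not=\emptyset$ persists and convex independence keeps failing. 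Your construction is simpler and produces a larger open set; what the paper's construction buys instead is the economically meaningful byproduct that convex dependence itself holds on an open set, hence (by Theorem 3) full extraction is robustly \emph{impossible} there --- a conclusion your witness does not deliver, since failure of convex independence alone does not imply failure of full extraction. For the literal statement of the theorem, either witness suffices.

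One caution: your key geometric fact as stated --- a relative interior point of a convex set lies in every convex set sufficiently Hausdorff-close to it --- is false for lower-dimensional sets (translate a segment inside $\Delta(S)$ slightly, with $\vert S\vert \geq 3$: relative interior points of the segment are in no nearby parallel translate). The fact, and your cross-polytope proof of it, require $A$ to be full-dimensional relative to the affine hull of $\Delta(S)$. Since you apply it only to a full-dimensional ball, your application is sound; just state the lemma with that hypothesis.
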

\begin{proof}
Both ${\cal I}$ and ${\cal D}$ are Borel sets, hence are universally measurable. By definition,
${\cal I}\cap {\cal D} = \emptyset$, so ${\cal I}\subseteq {\cal D}^c$ and ${\cal D}\subseteq {\cal I}^c$. The results will all follow provided both ${\cal I}$ and ${\cal D}$ have non-empty relative interior. In both cases, we will establish this by constructing relative interior points.

First consider ${\cal I}$. Choose $\{ \pi(t)\in \Delta(S): t\in T\}$ such that $\pi(t) \not\in \overline{\mbox{co}}\{ \pi(t'): t'\not= t\}$  for
each $t$, that is, such that $\{ \{\pi(t)\} : t\in T\}$ satisfies convex independence. Choose $\varepsilon > 0$ such that
\[
\forall t\in T: \ \ B_\varepsilon(\pi(t)) \cap \overline{\mbox{co}} \left( \cup_{t'\not= t}B_\varepsilon(\pi(t'))\right) =\emptyset
\]
that is, such that $\{ B_\varepsilon(\pi(t)): t\in T\}$ also satisfies convex independence.\footnote{As in section 3, here for $\pi\in \Delta(S)$ and $\beta>0$, $B_\beta(\pi)$ denotes the ball of radius $\beta$ about $\pi$ in $\Delta(S)$, so $B_\beta(\pi) = \{ \pi'\in \Delta(S): \Vert \pi'-\pi \Vert < \beta\}$.} Now if $\Pi\in {\cal B}$ and $d(\Pi, \{ \pi(t)\})<\varepsilon$, then $\Pi \subseteq B_\varepsilon(\pi(t))$.\footnote{Here $d(A,B)$ denotes the Hausdorff distance between $A, B\in {\cal B}$, defined by
\[
d(A,B) = \max\left\lbrace\sup_{x\in A} \mbox{dist}(x, B), \sup_{y\in B} \mbox{dist}(y,A)\right\rbrace
\]} Thus any collection $\{ \Pi(t)\in {\cal B}: t\in T\}$ such that $d(\Pi(t), \{ \pi(t)\})<\varepsilon$ for each $t$ must satisfy convex independence as well. From this we conclude that $\{ \{\pi(t)\} : t\in T\}$ is a relative interior point of ${\cal I}$. 

Next, consider ${\cal D}$. Fix $t_0\in T$. Choose $\Pi(t_0)\in {\cal B}$ such that $\mbox{rint}\Pi(t_0) \not= \emptyset$. Fix $\varepsilon >0$ and choose $\bar \pi(t_0)\in \mbox{rint}\Pi(t_0)$ such that $B_\varepsilon(\bar \pi(t_0)) \subseteq \Pi(t_0)$. Now choose $\{ \Pi(t) \subseteq \Delta(S): t\in T\setminus \{ t_0\}\}$ such that 
\[
\overline{\mbox{co}} ( \cup_{t\not= t_0} \Pi(t)) \subseteq B_{\varepsilon/4}(\bar \pi(t_0))
\]
In particular then, $\{ \Pi(t): t\in T\}$ satisfies convex dependence, so the collection $\{ \Pi(t): t\in T\} \in {\cal D}$. 

If $\{ \bar \Pi(t) \in {\cal B}: t\in T\setminus \{ t_0\}\}$ is any collection such that $d(\bar \Pi(t), \Pi(t)) < \varepsilon/4$ for each $t\not= t_0$, then
\[
\overline{\mbox{co}} ( \cup_{t\not= t_0}\bar \Pi(t)) \subseteq B_{\varepsilon/2}(\bar \pi(t_0))
\]
Finally, if $\Pi\in {\cal B}$ and $d(\Pi, \Pi(t_0))<\varepsilon/2$, then $B_{\varepsilon/2}(\bar \pi(t_0))\subseteq \Pi$. Putting these observations together, any collection $\{ \bar \Pi(t)\in {\cal B}: t\in T\}$ such that $d(\bar \Pi(t_0), \Pi(t_0))<\varepsilon/2$ and $d(\bar \Pi(t), \Pi(t))<\varepsilon/4$ for each $t\not= t_0$ will also satisfy convex dependence, and hence belongs to ${\cal D}$.
\end{proof}

\section{Discussion and Extensions}

In this section we consider several extensions of our main results, again motivated by questions of robustness. We first explore the performance of given extraction mechanisms when beliefs are perturbed. This allows us to study the extent to which a designer can construct a menu of contracts that performs well even if it is based on beliefs that are incorrect.  Then we show that several of our main results carry over to other standard models of ambiguity, including versions of maxmin and alpha-maxmin expected utility. 

One criticism of the standard full extraction results of Cr{\' e}mer and McLean (1985, 1988) and McAfee and Reny (1992) is that they rely on the designer knowing agents' beliefs precisely. As such, the information required by the designer is substantial, and the conclusion that full extraction is possible whenever convex independence holds is perhaps fragile. A main motivation for our original questions and results was the possibility of misspecification of agents' beliefs in the standard surplus extraction problem. Our main results can thus already be interpreted as providing conditions under which full extraction might be robust to misspecification of beliefs, and illustrating limits on extraction as a consequence. As we detailed throughout the paper, our main results can also be derived from agents' perceptions of uncertainty. In this interpretation, our results are subject to criticisms analogous to the standard case, as the designer must know agents' (sets of) beliefs precisely for our full extraction results as well. For either interpretation, it is thus natural to ask how robust a given extraction mechanism might be to mistakes in the designer's specification of agents' beliefs. To explore this question, we first consider a more general notion of surplus extraction.\footnote{Parallel results to Theorems 7 and 8 below can be derived for an analogous more general notion of optimal full extraction, along the lines of Corollary 2.}      

\begin{definition}
Let $v:T\to {\bf R}$ be given, and let $b\in {\bf R}_+$. The menu $C= \{ c(t) \in {\bf R}^S : t\in T\}$ \emph{leaves at most} $b$ \emph{surplus for beliefs} $\{ \Pi(t)\subseteq \Delta(S): t\in T\}$ if for each $t\in T$:
\begin{eqnarray*}
v(t) -\pi \cdot c(t) &\geq & 0 \ \ \forall \pi\in \Pi(t)\\
v(t) - \pi\cdot c(s) &\leq& 0 \ \ \forall \pi\in \Pi(t), \ \forall s\not= t
\end{eqnarray*}
and
\[
\min_{\pi\in \Pi(t)} \left( v(t) - \pi\cdot c(t) \right) \leq b
\]
\end{definition}
Note that when $b=0$, this notion coincides with full extraction.

We start with a simple observation. When misspecification enlarges sets of beliefs, any full extraction menu designed for these larger sets will still extract surplus under agents' true beliefs, leaving  at most an amount of expected surplus bounded by the amount of misspecification. To formalize this observation, suppose beliefs $\{ \Pi(t) \subseteq \Delta(S) : t\in T\}$ satisfy convex independence. Let $v:T\to {\bf R}$ be given and suppose the menu $C=\{ c(t) \in {\bf R}^S: t\in T\}$ achieves full extraction given $v$ and beliefs $\{ \Pi(t): t\in T\}$. Now suppose the designer has misspecified agents' beliefs, and that  true beliefs are a subset of these, so  $\Pi'(t) \subseteq \Pi(t)$ for each $t\in T$. The menu $C$ might fail to extract all of the surplus in this case, but it will still be individually rational and incentive compatible. To see this, note that for each $t\in T$:
\begin{eqnarray*}
v(t) - \pi\cdot c(t) & \geq & 0 \ \ \forall \pi\in \Pi'(t)\\
v(t) - \pi\cdot c(s) & \leq & 0 \ \ \forall \pi\in \Pi'(t), \ \forall s\not= t
\end{eqnarray*}
since $\Pi'(t) \subseteq \Pi(t)$ for each $t$, and the menu $C$ achieves full extraction for the beliefs $\{ \Pi(t): t\in T\}$. Full extraction might not hold for the true beliefs $\{ \Pi'(t): t\in T\}$, however: although $v(t) -\pi\cdot c(t) = 0$ for some $\pi\in \Pi(t)$, expected surplus $v(t) -\pi\cdot c(t)$ might be strictly positive for all $\pi\in \Pi'(t)$. 
Although $C$ might not achieve full extraction, we can nonetheless give an upper bound on the surplus $C$ leaves to agents, based on the Hausdorff distance between the true beliefs and the beliefs used to construct the menu $C$. To that end, fix $t\in T$. Because $C$ achieves full extraction for beliefs $\{ \Pi(t): t\in T\}$,  there exists $\pi(t)\in \Pi(t)$ such that $v(t) - \pi(t) \cdot c(t)=0$.  Then for any $\pi \in \Pi'(t)$, 
\begin{eqnarray*}
v(t) - \pi \cdot c(t) &=& \vert \left( v(t) - \pi \cdot c(t) \right) - \left( v(t) - \pi(t) \cdot c(t) \right) \vert\\
&=& \vert (\pi -\pi(t) ) \cdot c(t)\vert \leq \Vert \pi - \pi(t) \Vert \Vert c(t) \Vert
\end{eqnarray*}
 Thus   
\[
\min_{\pi\in \Pi'(t)} \left( v(t) - \pi\cdot c(t)\right) \leq \min_{\pi\in \Pi'(t)} \Vert \pi - \pi(t) \Vert \Vert c(t) \Vert\\
\leq d(\Pi(t), \Pi'(t))  \Vert c(t) \Vert  
\]
Repeating this argument for each $t\in T$ shows that the menu $C$ leaves at most $d\Vert C \Vert$ surplus for beliefs $\{ \Pi'(t): t\in T\}$, where $\Vert C \Vert = \max\limits_{t\in T} \Vert c(t) \Vert$ and $d = \max\limits_{t\in T} d(\Pi(t), \Pi'(t))$.  The surplus $d\Vert C\Vert$ can be viewed as the cost of misspecification, or the cost of robustness concerns for the designer in this case. 

We record this observation below. 

\begin{theorem}
Suppose beliefs $\{ \Pi(t) \subseteq \Delta(S) : t\in T\}$ satisfy convex independence. Let $v:T\to {\bf R}$ be given and suppose the menu $C =\{ c(t) \in {\bf R}^S: t\in T\}$ achieves full extraction given $v$ and beliefs $\{ \Pi(t): t\in T\}$. Then $C$ leaves at most $d\Vert C \Vert$ surplus for any beliefs $\{ \Pi'(t) \subseteq \Delta(S): t\in T\}$ with $\Pi'(t) \subseteq \Pi(t)$ for each $t\in T$, where $\Vert C \Vert = \max\limits_{t\in T} \Vert c(t) \Vert$ and $d = \max\limits_{t\in T} d(\Pi(t), \Pi'(t))$. 
\end{theorem}

The leading example of $\varepsilon$-contamination illustrates this result, and provides a particularly natural setting in which misspecification might correspond to enlarging agents' sets of beliefs. Here suppose $\{ \pi(t) \in \Delta(S): t\in T\}$ is given and $\varepsilon >0$. Then  note that for any $\varepsilon ' \geq 0$,
\[
d(\Pi_\varepsilon(t), \Pi_{\varepsilon'}(t)) \leq 2\vert \varepsilon - \varepsilon'\vert
\]
To see this, let $\pi_\varepsilon \in \Pi_\varepsilon(t)$, so $\pi_\varepsilon = (1-\varepsilon)\pi(t)  + \varepsilon \pi$ for some $\pi\in \Delta(S)$. Then set $\pi_{\varepsilon'} = (1-\varepsilon') \pi(t) + \varepsilon' \pi$. By definition, $\pi_{\varepsilon'} \in \Pi_{\varepsilon'}(t)$, and
\[
\pi_\varepsilon - \pi_{\varepsilon'} = (\varepsilon' - \varepsilon) \pi(t) + (\varepsilon - \varepsilon') \pi = (\varepsilon - \varepsilon') (\pi - \pi(t))
\]
so 
\[
\Vert \pi_\varepsilon - \pi_{\varepsilon'} \Vert \leq \vert \varepsilon - \varepsilon'\vert \Vert \pi -  \pi(t)\Vert \leq 2 \vert \varepsilon - \varepsilon'\vert
\]
%(Note that since $\pi, \pi(t) \in \Delta(S)$, then $\pi_s, \pi(t)_s \geq 0 \ \forall s$ and $\sum_{s\in S} \pi_s = \sum_{s\in %S}\pi(t)_s = 1$. Thus $\sum_{s\in S} \pi_s^2 \leq \sum_{s\in S} \pi_s =1$ and $\sum_{s\in S}\pi(t)_s^2 \leq \sum_{s\in S}\pi(t)_s = %1$.)
Thus $d(\Pi_\varepsilon(t) , \Pi_{\varepsilon'}(t)) \leq 2\vert \varepsilon - \varepsilon'\vert$. 

Putting this observation together with the previous ones yields the following corollary.  

\begin{corollary}
Suppose $\{  \pi(t) \in \Delta(S): t\in T\}$ satisfies convex independence. Let $\varepsilon >0$ such that $\{\Pi_\varepsilon(t): t\in T\}$ satisfies convex independence, and let $v:T\to {\bf R}$ be given. Suppose the menu $C = \{ c(t) \in {\bf R}^S :t\in T\}$ achieves full extraction given $v$ and beliefs    $\{\Pi_\varepsilon(t): t\in T\}$. For any $0\leq \varepsilon ' < \varepsilon$, the menu $C$ leaves at most $2(\varepsilon - \varepsilon') \Vert C \Vert$ surplus for beliefs $\{\Pi_{\varepsilon'}(t): t\in T\}$. In particular, $C$ leaves at most $2\varepsilon \Vert C \Vert $ surplus for beliefs $\{ \pi(t) \in \Delta(S): t\in T\}$. 
\end{corollary}

More generally, misspecification might not simply enlarge the set of beliefs for each type. In such cases a menu designed for full extraction with respect to one set of beliefs might fail to be individually rational or incentive compatible, as well as failing full extraction, for agents' true beliefs, even if the misspecification is small.\footnote{This issue also arises in the standard setting with a unique belief for each type.} We show next that a virtual extraction result holds for mechanisms that are robust to small perturbations in beliefs. Thus a designer concerned about such misspecification could choose a menu that remains individually rational and incentive compatible for all sufficiently small perturbations in beliefs in exchange for extracting all but $\varepsilon$ surplus, for any $\varepsilon >0$.\footnote{This is similar in spirit to results in Chen and Xiong (2013) in the standard setting. Chen and Xiong (2013) show that extraction mechanisms can be chosen to be robust to sufficiently small changes in priors in the universal type space, in that for each $\varepsilon >0$, if a mechanism from a particular class extracts all but at most $\varepsilon$ surplus for a given prior, then the same mechanism also extracts all but $\varepsilon$ surplus for all priors in a sufficiently small weak-$^*$ neighborhood of the original prior.}

\begin{theorem}
Suppose beliefs $\{ \Pi(t) \subseteq \Delta(S): t\in T\}$ satisfy convex independence and $v:T\to {\bf R}$ is given. For each $\varepsilon >0$ there exists a menu $C=\{ c(t) \in {\bf R}^S : t\in T\}$ and there exists $\delta >0$ such that  $C$ leaves at most $\varepsilon$ surplus for   for any beliefs $\{ \Pi'(t) \subseteq \Delta(S): t\in T\}$ with $d(\Pi(t), \Pi'(t)) < \delta$ for all $t\in T$.
\end{theorem} 

\begin{proof}
Let $v:T\to {\bf R}$ be given and fix $\varepsilon >0$. Fix $t\in T$. Since $\{ \Pi(t): t\in T\}$ satisfies convex independence, there exists $z(t)\in {\bf R}^S$ such that 
\[
\pi\cdot z(t) < 0 \ \ \forall \pi\in \Pi(t)
\]
and
\[
\pi\cdot z(t) > 0 \ \ \forall \pi\in \Pi(s), \ \forall s\not= t
\]
Then there exists $ \delta' >0$ such that for any beliefs $\{ \Pi'(s): s\in T\}$ with $d(\Pi'(s), \Pi(s)) <  \delta'$ for all $s$, then 
\[
\pi\cdot z(t) < 0 \ \ \forall \pi\in \Pi'(t)
\]
and
\[
\pi\cdot z(t) > 0 \ \ \forall \pi\in \Pi'(s), \ \forall s\not= t
\]
Now set $\alpha(t)>0$ sufficiently large so that
\[
\alpha(t) > \max_{\substack{\pi\in \Pi(s)\\ s\not= t}} \frac{v(s)-v(t) +\varepsilon}{\pi\cdot z(t)}
\]
Since $\pi\cdot z(t)>0$ for all $\pi\in \Pi(s)$ and all $s\not= t$, and $T$ is finite, such an $\alpha(t)>0$ exists.  
Set 
\[
c(t) = v(t) -\frac{\varepsilon}{2} + \left\vert \max_{\pi\in \Pi(t)} \alpha(t) \left( \pi\cdot z(t) \right)\right\vert + \alpha(t) z(t)
\]
Then for type $t$, 
\begin{eqnarray*}
v(t) - \pi\cdot c(t) &\geq& \frac{\varepsilon}{2} \ \ \forall \pi\in \Pi(t)\\
v(t) -\pi\cdot c(t) &=& \frac{\varepsilon}{2} \ \ \mbox{ for some } \pi\in \Pi(t)
\end{eqnarray*}
while for types $s\not= t$,
\[
v(s) - \pi\cdot c(t) < -\frac{\varepsilon}{2} \ \ \forall \pi\in \Pi(s)
\]
Then there exists $\delta >0$ with $\delta \leq \delta'$ such that if beliefs $\{ \Pi'(s): s\in T\}$ satisfy $d(\Pi'(s), \Pi(s)) <\delta$ for all $s\in T$, then for type $t$:
\begin{eqnarray*}
v(t) - \pi\cdot c(t) &\geq & 0 \ \ \forall \pi\in \Pi'(t)\\
\min_{\pi\in \Pi'(t)} \left( v(t) -\pi\cdot c(t)\right) &\leq& \varepsilon
\end{eqnarray*}
and for types $s\not= t$:
\[
v(s) - \pi\cdot c(t) <0 \ \ \forall \pi\in \Pi'(s)
\]
Repeating this argument for all $t\in T$ yields a menu $C= \{c(t): t\in T\}$ that leaves at most $\varepsilon$ surplus for any  beliefs $\{\Pi'(t) : t\in T\}$ with $d(\Pi'(t), \Pi(t)) < \delta$ for all $t\in T$. 
\end{proof}

Again returning to the leading example to illustrate yields the following corollary.

\begin{corollary}
Suppose $\{  \pi(t)\in \Delta(S): t\in T\}$ satisfies convex independence. Let $\varepsilon >0$ such that $\{\Pi_\varepsilon(t): t\in T\}$ satisfies convex independence, and let $v:T\to {\bf R}$ be given. For each $\eta >0$ there exists a menu $C = \{ c(t)\in {\bf R}^S:t\in T\}$ and there exists $\delta >0$ such that $C$ leaves at most $\eta$ surplus for any beliefs $\{\Pi_{\varepsilon'}(t): t\in T\}$ with $\vert \varepsilon' - \varepsilon\vert < \delta$. 
\end{corollary}

Next we consider the interpretation of our results as robustness to agents' perceptions of ambiguity, and show that some of our main results can be extended to other common models of ambiguity. For this extension, as in our basic model, we assume that each type $t\in T$ has some information rent $v(t)\in {\bf R}$, and evaluates stochastic contracts using a binary relation $\succsim_t$ on ${\bf R}^S$. Here we assume that each binary relation $\succsim_t$ is represented by a function $V_t:{\bf R}^S \to {\bf R}$, thus that $\succsim_t$ is complete for each $t\in T$. We focus on a simple class of utility functions that can be related to expected surplus with respect to some set of beliefs, as defined below.

\begin{definition}
Utility $V_t:{\bf R}^S\to {\bf R}$ is \emph{belief-based} if there exists a closed, convex set $\Pi(t) \subseteq \Delta(S)$ such that for each $v(t) \in {\bf R}$ and for all $x\in {\bf R}^S$, 
\[
V_t(x) = \pi\cdot (v(t) - x) = v(t) - \pi\cdot x \ \ \mbox{ for some } \pi\in \Pi(t)
\] 
and for all $r\in {\bf R}$, 
\[
V_t(r+ x) = - r + V_t(x)
\]
If $V_t$ is belief-based, we take $\Pi(t) \subseteq \Delta(S)$ to be the smallest such set, with respect to set inclusion, and we  say  $\Pi(t)$ are \emph{beliefs for} $t$ in this case. 
\end{definition}

Versions of standard maxmin expected utility and alpha-maxmin expected utility give two examples of belief-based utilities. For these  examples, let $\Pi(t) \subseteq \Delta(S)$ be closed and convex. Then $V_t:{\bf R}^S\to {\bf R}$ given by
\[
V_t(x) = \min_{\pi\in \Pi(t)} \pi \cdot (v(t) - x) = \min_{\pi\in \Pi(t)} (v(t) - \pi\cdot x)
\]
is belief-based. Similarly, if $\alpha \in [0,1]$, then $V_t:{\bf R}^S\to {\bf R}$ given by
\[
V_t(x) = \alpha \min_{\pi\in \Pi(t)} \pi \cdot (v(t) - x) + (1-\alpha) \max_{\pi\in \Pi(t)} \pi \cdot (v(t) - x) 
\]
is also belief-based. In each case, the set $\Pi(t)$ gives beliefs for $t$. 

For belief-based utilities, full extraction is possible whenever beliefs satisfy convex independence, as in our basic model. To show this, we start by defining full extraction in this setting. 

\begin{definition}
\emph{Full extraction holds for belief-based utilities} $\{ V_t: t\in T\}$ if for any $v:T\to {\bf R}$ there exists a menu $\{ c(t) \in {\bf R}^S : t\in T\}$ such that  for each $t\in T$:
\begin{eqnarray*}
V_t(c(t)) &=& 0\\
V_t(c(s)) &\leq& 0 \ \ \forall s\not= t
\end{eqnarray*}
\end{definition}

As the following theorem shows, full extraction holds for belief-based utilities $\{ V_t: t\in T\}$ whenever corresponding beliefs $\{ \Pi(t): t\in T\}$ satisfy convex independence. 

\begin{theorem}
Suppose for each $t\in T$,  $V_t$ is belief-based with corresponding beliefs $\Pi(t)$. If  $\{ \Pi(t): t\in T\}$ satisfies convex independence, then full extraction holds for belief-based utilities $\{ V_t: t\in T\}$. 
\end{theorem}

\begin{proof}
Let $v:T\to {\bf R}$ be given. Fix $t\in T$. By convex independence, there exists $z(t) \in  {\bf R}^S$ such that 
\[
\pi\cdot z(t) \leq 0 \ \ \forall \pi\in \Pi(t)
\]
and
\[
\pi\cdot z(t) >  0 \ \ \forall \pi\in \Pi(s), \ \ \forall s\not= t
\]
Set $c_0(t) = v(t) + \alpha(t) z(t)$ where $\alpha(t) \in {\bf R}_+$. Choose $\alpha(t) >0$ such that 
\[
\alpha(t) > \max_{\substack{\pi\in \Pi(s)\\ s\not= t }} \frac{v(s) - v(t)}{\pi\cdot z(t)}
\]
Since $\pi\cdot z(t)>0$ for all $\pi\in \Pi(s)$ and all $s\not= t$, and $T$ is finite, such an $\alpha(t)>0$ exists.  
Then set $p(t) = V_t(c_0(t))$ and set 
\[
c(t) = v(t) + p(t) + \alpha(t) z(t) = p(t) + c_0(t)
\]
Note that $p(t) = V_t(c_0(t)) \geq 0$, as $V_t(c_0(t)) = v(t) - v(t) -\alpha(t) \left(\pi\cdot z(t)\right)$ for some $\pi\in \Pi(t)$, where $\pi\cdot z(t)\leq 0$ for all $\pi \in \Pi(t)$ and $\alpha(t) >0$. Then for type $t$,
\[
V_t(c(t)) = -p(t) + V_t( c_0(t)) = -V_t(c_0(t)) + V_t(c_0(t)) = 0 
\]
and for types $s\not= t$,
\begin{eqnarray*} 
V_s(c(t)) &=& - p(t) + V_s( c_0(t))\\
&=& - p(t) + v(s) - v(t)  - \alpha(t) \left( \pi\cdot z(t)\right) \ \ \mbox{ for some } \pi\in \Pi(s)\\
&\leq& v(s) - v(t) - \alpha(t) \left( \pi\cdot z(t)\right) \\
&<& 0 \ \ \mbox{ by choice of } \alpha(t)
\end{eqnarray*}
Repeating this argument for each $t\in T$ yields a menu $\{ c(t) : t\in T\}$ that achieves full extraction. 
\end{proof}

Finally, we focus on the central case of maxmin expected surplus. In this case, full extraction again holds only if the corresponding beliefs $\{ \Pi(t): t\in T\}$ do not satisfy convex dependence. 

\begin{theorem}
Suppose for each $t\in T$, $V_t(x) = \min\limits_{\pi\in \Pi(t)} \pi\cdot (v(t) - x)$ for some closed, convex set $\Pi(t) \subseteq \Delta(S)$. Full extraction holds only if beliefs $\{ \Pi(t): t\in T\}$ do not satisfy convex dependence. 
\end{theorem}

\begin{proof}
Suppose beliefs satisfy convex dependence, and suppose 
\[
\overline{\mbox{co}} \left( \cup_{t\not= t_0} \Pi(t)\right) \subseteq \Pi(t_0)
\]
Let $v:T\to {\bf R}$ such that $v(t) > v(t_0)$ for $t\not= t_0$. Suppose by way of contradiction that full extraction holds. Thus there exists a menu $\{ c(t): t\in T\}$ such that for each $t\in T$:
\begin{eqnarray*}
V_t(c(t)) &=& 0\\
V_t(c(s)) &\leq& 0 \ \ \forall s\not= t  
\end{eqnarray*}
Then fix $t\not= t_0$. Without loss of generality write $c(t_0) = v(t_0) + z(t_0)$ for $z(t_0)\in {\bf R}^S$. Then 
\begin{eqnarray*}
0 &=& V_{t_0}(c(t_0)) = v(t_0) - v(t_0) -\max_{\pi\in \Pi(t_0)} \pi\cdot z(t_0)\\
\Rightarrow & \ & \max_{\pi\in \Pi(t_0)} \pi\cdot z(t_0) = 0
\end{eqnarray*}
Then for $t\not= t_0$, 
\begin{eqnarray*}
V_t(c(t_0)) &=& v(t) - \max_{\pi\in \Pi(t)} \pi\cdot (v(t_0) + z(t_0))\\
&=& v(t) - v(t_0) - \max_{\pi\in \Pi(t)} \pi\cdot z(t_0)
\end{eqnarray*}
But $\Pi(t) \subseteq \Pi(t_0)$, so 
\[
\max_{\pi\in \Pi(t)} \pi\cdot z(t_0) \leq  \max_{\pi\in \Pi(t_0)} \pi \cdot z(t_0) = 0
\]
So 
\[
V_t(c(t_0)) \geq v(t) - v(t_0) >0
\]
This is a contradiction. Thus full extraction is not possible. 
\end{proof}

As an immediate corollary of Theorem 6, Theorems 9 and 10 imply that for maxmin expected surplus, whenever $\vert S\vert \geq \vert T \vert$ full extraction is neither generically possible nor generically impossible.

\end{document}